\newtheorem{theorem}{Theorem}
\newtheorem{lemma}[theorem]{Lemma}
\newtheorem{proposition}[theorem]{Proposition}
\newenvironment{proof}[1][Proof]{\noindent\textbf{#1.} }{\ \rule{0.5em}{0.5em}}
\begin{document}

\title{The Better Half of Selling Separately\thanks{%
Previous versions: December 2017 (\texttt{https://arxiv.org/abs/1712.08973}%
), November 2016. We thank Noam Nisan for useful discussions, and the
editors and referees for their helpful suggestions and comments.}}
\author{Sergiu Hart\thanks{%
Department of Economics, Institute of Mathematics, and Federmann Center for
the Study of Rationality, The Hebrew University of Jerusalem.\quad \emph{%
E-mail}: \texttt{hart@huji.ac.il} \quad \emph{Web site}: \texttt{%
http://www.ma.huji.ac.il/hart}} \and Philip J. Reny\thanks{%
Department of Economics, University of Chicago. Research partially supported
by National Science Foundation grant SES-1724747.\quad \emph{E-mail}: 
\texttt{preny@uchicago.edu} \quad \emph{Web site}: \texttt{%
http://economics.uchicago.edu/directory/philip-j-reny}}}
\maketitle

\begin{abstract}
Separate selling of two independent goods is shown to yield at least $62\%$
of the optimal revenue, and at least $73\%$ when the goods satisfy the
Myerson regularity condition. This improves the $50\%$ result of Hart and
Nisan (2017\textbf{, }originally circulated in 2012).
\end{abstract}

\tableofcontents

\def\@biblabel#1{#1\hfill}
\def\thebibliography#1{\section*{References}
\addcontentsline{toc}{section}{References}
\list
{}{
\labelwidth 0pt
\leftmargin 1.8em
\itemindent -1.8em
\usecounter{enumi}}
\def\newblock{\hskip .11em plus .33em minus .07em}
\sloppy\clubpenalty4000\widowpenalty4000
\sfcode`\.=1000\relax\def\baselinestretch{1}\large \normalsize}
\let\endthebibliography=\endlist%

\section{Introduction\label{s:introduction}}

One of the most celebrated aspects of Myerson's (1981) optimal auction
result is that it provides an economic explanation for the ubiquitous use of
the four standard auction forms. Strictly speaking, however, Myerson's
results apply only to cases in which a seller is selling a single good.
Because many sellers sell multiple goods, extending Myerson's analysis to
the multi-good case has long been considered a critical next step. But the
multi-good monopoly problem has resisted a complete solution for over 35
years and by now it is well understood to be an extremely difficult problem.
Worse still, it is known that the optimal solution must typically be quite
complex and very often requires buyers to purchase randomized contracts. And
therein lies the difficulty, because we do not often, if ever, observe
complex or randomized selling mechanisms in practice. This raises the
obvious question, Why not?

One reason that we may not observe the kinds of complex mechanisms that full
optimality dictates is that relatively simple mechanisms may suffice for
generating much\ of the revenue that could ever be generated. Thus, a
complementary approach to the research program initiated in Myerson (1981)
is to search for relatively simple mechanisms that yield a significant
fraction of the revenue that is generated by a fully optimal mechanism.%
\footnote{%
For examples of this approach, see Hart and Nisan (2017) and the references
therein.} The present paper represents a modest contribution to this program.

We consider a single seller who has one unit each of two indivisible goods.
The two goods need not be identical. The seller, whose value for the two
goods is zero, can offer to sell the goods to a single buyer. The buyer's
two values, one value for each of the two goods, are unknown to the seller
but are known to be independently distributed (so we say that the two goods
are independent). The buyer is risk neutral and has preferences that are
additive in the values and (negatively so) in the price paid. Even in this
most basic case, there is no known characterization of the optimal selling
mechanism, though it is known that the optimal mechanism can display unusual
properties.\footnote{%
For example, the optimal mechanism can be non-monotonic (i.e., increasing
the buyer's valuations may well decrease the seller's optimal revenue), and
it can require the buyer to accept randomized contracts. See Hart and Reny
(2015); Section 3 there contains references to previous examples that
require such randomizations.} We ask, What fraction of the optimal revenue
can the seller guarantee by selling each of the two goods separately, i.e.,
by posting a Myerson-optimal price for each of the goods?

In the context of a general analysis with any finite number of independent
goods, Hart and Nisan (2017\textbf{, }originally circulated in 2012) show in
particular that, by selling two independent goods separately, the seller can
guarantee at least $50\%$ of the optimal revenue but cannot guarantee more
than\footnote{%
Hart and Nisan (2017) establish the $78\%$ upper bound with an explicit
example in which it is optimal to sell two independent and identically
distributed goods as a bundle (these goods satisfy the Myerson-regularity
condition).} $78\%$. A nice feature of the $50\%$ revenue guarantee is that
its proof is relatively simple. In part, this simplicity arises from the
rather generous bounds that are established at various steps. While it seems
clear that the bounds employed in the Hart--Nisan proof are
\textquotedblleft much\textquotedblright\ too generous, tightening them as
we do here requires a surprising amount of additional effort. Hart and Nisan
also show that if, in addition, the buyer's two independent values are \emph{%
identically} distributed, then the revenue guarantee is at least $73\%$,
which is tantalizingly close to the $78\%$ upper limit.

Our main result significantly improves upon the Hart--Nisan $50\%$
guarantee, and shows that their $73\%$ guarantee with identically
distributed values can also be obtained when the buyer's value distributions
satisfy Myerson-regularity. None of our results require the two values to be
identically distributed.

\begin{quote}
\noindent \textbf{Main Result.}\textsc{\ }\emph{For any two independent
goods, selling each good separately at its optimal one-good price guarantees
at least}\linebreak \emph{\ }$\sqrt{e}/(\sqrt{e}+1)\approx 62\%$\emph{\ of
the optimal revenue. Furthermore, if the buyer's two value distributions
each satisfy Myerson-regularity, then the guaranteed fraction of optimal
revenue increases to\linebreak\ }$e/(e+1)\approx 73\%.$
\end{quote}

\noindent This is stated below as Theorems \ref{th:62.2} and \ref{th:regular}%
.

To summarize the known bounds on the guaranteed fraction of optimal revenue (%
\textsc{gfor}) from selling separately two goods: when the goods are
independent, the \textsc{gfor} is at least $62\%;$ when they are independent
and either Myerson-regular or identically distributed, the \textsc{gfor} is
at least $73\%;$ in all these three cases, the \textsc{gfor} is at most $%
78\%;$ and, when the goods are not necessarily independent, the \textsc{gfor}
drops all the way down to zero (Hart and Nisan 2013).\footnote{%
For more than two goods a similar result is due to Briest, Chawla,
Kleinberg, and Weinberg (2015, originally circulated in 2010).}

\subsection{Some Related Work}

There is by now a vast literature in game theory, economics, and computer
science that deals with the (optimal) selling of multiple goods. While that
literature is too large to survey here, the reader may wish to consult the
literature section in, say, Hart and Nisan (2017) for an overview. We will
mention here the work of Babaioff, Immorlica, Lucier, and Weinberg (2014)
that shows that the better option between selling the goods separately and
selling them as the bundle of all goods yields a \textsc{gfor} that is
bounded away from zero \emph{for any number of goods.} Recently, in the case
of two independent goods, Babaioff, Nisan, and Rubinstein (2018) have shown
that separate selling yields at least $78\%$ of the optimal \emph{%
deterministic} revenue, and that this bound is tight. In the related setup
of a \emph{unit-demand buyer} (who desires to buy only one good, rather than
having an additive value over bundles of goods), Chawla, Malec, and Sivan
(2010, Theorem 5) show a \textsc{gfor} of $1/4$ for the separate selling of
any number of independent goods. Finally, Daskalakis, Deckelbaum, and Tzamos
(2017) provide a useful duality characterization of the revenue-optimizing
mechanism for multiple goods.

\subsection{Organization of the Paper}

The paper is organized as follows. Section \ref{s:preliminaries} presents
the model, defines the appropriate concepts, and provides some preliminary
results. Section \ref{sus:overview} gives an outline of the proof. The proof
itself consists of a general decomposition result (Proposition \ref%
{p:revenue} in Section \ref{s:bound}) and an estimate of the crucial term
there (Proposition \ref{p:bound-I} in Section \ref{s:K1-K2}), which, when
combined, give the first part of the Main Theorem, namely, the general $62\%$
bound (Theorem \ref{th:62.2} in Section \ref{s:proof of thm A}). Section \ref%
{s:73} proves the second part of the Main Theorem, namely, the $73\%$ bound
for regular distributions (Theorem \ref{th:regular}), together with some
additional results. Appendix \ref{s:continuity} provides a general result on
the continuity of the revenue with respect to valuations (which is of
independent interest), and Appendix \ref{s:nonsymm} gives a simple
illustration of the use of nonsymmetric diagonals.

\section{Preliminaries\label{s:preliminaries}}

\subsection{The Model\label{sus:model}}

The basic model is standard, and the notation follows Hart and Reny (2015)
and Hart and Nisan (2017), which the reader may consult for further details
and references.

One seller (or \textquotedblleft monopolist") is selling a number $k\geq 1$
of goods\ (or \textquotedblleft items," \textquotedblleft objects," etc.) to
one buyer.

The goods have no value or cost to the seller. Let $x_{1},x_{2},...,x_{k}%
\geq 0$ be the buyer's values for the goods. The value for getting a set of
goods is \emph{additive}: getting the subset $I\subseteq \{1,2,...,k\}$ of
goods is worth $\sum_{i\in I}x_{i}$ to the buyer (and so, in particular, the
buyer's demand is \emph{not} restricted to one good only). The valuation of
the goods is given by a random variable $X=(X_{1},X_{2},...,X_{k})$ that
takes values in $\mathbb{R}_{+}^{k}$ (we thus assume that valuations are
always nonnegative); we will refer to $X$ as a $k$\emph{-good} \emph{random
valuation.} The realization $x=(x_{1},x_{2},...,x_{k})\in \mathbb{R}_{+}^{k}$
of $X$ is known to the buyer, but not to the seller, who knows only the
distribution $F$ of $X$ (which may be viewed as the seller's belief); we
refer to a buyer with valuation $x$ also as a buyer of \emph{type }$x$. The
buyer and the seller are assumed to be risk neutral and to have quasi-linear
utilities.

The objective is to \emph{maximize} the seller's (expected) \emph{revenue}.

By the Revelation Principle\ (Myerson 1981), it is without loss of
generality to restrict attention to \textquotedblleft direct
mechanisms\textquotedblright\ that are \textquotedblleft incentive
compatible.\textquotedblright\ A direct\emph{\ mechanism} $\mu $ consists of
a pair of functions\footnote{%
All functions in this paper are assumed to be Borel measurable (cf. Hart and
Reny 2015, footnotes 10 and 48).} $(q,s),$ where $q=(q_{1},q_{2},...,q_{k}):%
\mathbb{R}_{+}^{k}\rightarrow \lbrack 0,1]^{k}$ and\footnote{%
Without loss of generality any mechanism can always be extended to the whole
space $\mathbb{R}_{+}^{k};$ see Hart and Reny (2015).} $s:\mathbb{R}%
_{+}^{k}\rightarrow \mathbb{R}.$ If the buyer reports a valuation vector $%
x\in \mathbb{R}_{+}^{k},$ then $q_{i}(x)\in \lbrack 0,1]$ is the probability
that the buyer receives good\footnote{%
When the goods are infinitely divisible and the valuations are linear in
quantities, $q_{i}$ may be alternatively viewed as the \emph{quantity} of
good $i$ that the buyer gets.} $i$ (for $i=1,2,...,k$), and $s(x)$ is the
payment that the seller receives from the buyer\emph{.} When the buyer
reports his value $x$ truthfully, his payoff is\footnote{%
The scalar product of two $n$-dimensional vectors $y=(y_{1},...,y_{n})$ and $%
z=(z_{1},...,z_{n})$ is $y\cdot z=\sum_{i=1}^{n}y_{i}z_{i}$.} $%
b(x)=\sum_{i=1}^{k}q_{i}(x)x_{i}-s(x)=q(x)\cdot x-s(x),$ and the seller's
payoff is $s(x).$

The mechanism $\mu =(q,s)$ satisfies \emph{individual rationality} (\textbf{%
IR)} if $b(x)\geq 0$ for every\footnote{%
Individual rationality recognizes that, regardless of his valuation, the
buyer can obtain an expected payoff of zero by not participating in the
mechanism.} $x\in \mathbb{R}_{+}^{k};$ it satisfies \emph{incentive
compatibility} (\textbf{IC}) if $b(x)\geq q(\tilde{x})\cdot x-s(\tilde{x})$
for every alternative report $\tilde{x}\in \mathbb{R}_{+}^{k}$ of the buyer
when his value is $x,$ for every $x\in \mathbb{R}_{+}^{k};$ and it satisfies 
\emph{no positive transfer} (\textbf{NPT)} if $s(x)\geq 0$ for every $x\in 
\mathbb{R}_{+}^{k}$ (which, together with IR, implies that $s(0)=b(0)=0).$

The (expected) revenue of a mechanism $\mu =(q,s)$ from a buyer with random
valuation $X,$ which we denote by $R(\mu ;X),$ is the expectation of the
payment received by the seller; i.e., $R(\mu ;X)=\mathbb{E}\left[ s(X)\right]
.$ We now define:

\begin{itemize}
\item \textsc{Rev}$(X),$ the \emph{optimal revenue}, is the maximal revenue
that can be obtained: \textsc{Rev}$(X):=\sup_{\mu }R(\mu ;X),$ where the
supremum is taken over all mechanisms $\mu $ that satisfy IR and IC.
\end{itemize}

When there is only one good, i.e., when $k=1,$ Myerson's (1981) result is
that 
\begin{equation}
\text{\textsc{Rev}}(X)=\sup_{p\geq 0}p\cdot \mathbb{P}\left[ X\geq p\right]
=\sup_{p\geq 0}p\cdot \mathbb{P}\left[ X>p\right] =\sup_{p\geq 0}p\cdot
(1-F(p)),  \label{eq:one good}
\end{equation}%
where $F$ is the cumulative distribution function of $X.$ Optimal mechanisms
correspond to the seller \textquotedblleft posting" a price $p$ and the
buyer buying the good for the price $p$ whenever his value is at least $p$;
in other words, the seller makes the buyer a \textquotedblleft
take-it-or-leave-it" offer to buy the good at price $p.$

Besides the maximal revenue \textsc{Rev}$(X)$, we consider what can be
obtained from the simple class of mechanisms that sell each good separately.

\begin{itemize}
\item \textsc{SRev}$(X),$ the \emph{separate revenue}, is the maximal
revenue that can be obtained by selling each good separately. Thus 
\begin{equation*}
\text{\textsc{SRev}}(X)=\text{\textsc{Rev}}(X_{1})+\text{\textsc{Rev}}%
(X_{2})+...+\text{\textsc{Rev}}(X_{k}).
\end{equation*}
\end{itemize}

\noindent The separate revenue is obtained by solving $k$ one-dimensional
problems (using (\ref{eq:one good})), one for each good.

We now state the basic properties from Hart and Nisan (2017, Propositions 5
and 6) needed for our proof.

\begin{proposition}
\label{p:HN-p-5-6}

\begin{description}
\item[(i)] Let $\mu =(q,s)$ be a mechanism for $k$ goods with buyer payoff
function $b.$ Then $\mu =(q,s)$ satisfies IC if and only if $b$ is a convex
function and for all $x$ the vector $q(x)$ is a subgradient of $b$ at $x$
(i.e., $b(\tilde{x})-b(x)\geq q(x)\cdot (\tilde{x}-x)$ for all $\tilde{x}$).

\item[(ii)] \textsc{Rev}$(X)=\sup_{\mu }R(\mu ;X)$ with the supremum taken
over all IC, IR, and NPT mechanisms $\mu $.
\end{description}
\end{proposition}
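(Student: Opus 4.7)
The plan is to handle the two parts separately, each by direct manipulation.

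For (i), the key observation is that, using the identity $s(\tilde x) = q(\tilde x)\cdot\tilde x - b(\tilde x)$ (the definition of $b$ at $\tilde x$), the IC inequality $b(x) \geq q(\tilde x)\cdot x - s(\tilde x)$ is algebraically identical to
\[
b(x) - b(\tilde x) \geq q(\tilde x)\cdot(x-\tilde x),
\]
which says exactly that $q(\tilde x)$ is a subgradient of $b$ at $\tilde x$. Thus the two directions of the equivalence in (i) amount to the two directions of this one rewriting. Convexity of $b$ in the IC $\Rightarrow$ direction can then either be read off from the standard fact that a function admitting a subgradient at every point is convex, or verified directly by observing that IC exhibits $b$ as $\sup_{\tilde x}[q(\tilde x)\cdot x - s(\tilde x)]$, with the supremum attained at $\tilde x = x$, so $b$ is a pointwise supremum of affine functions.

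For (ii), one inclusion is trivial: imposing NPT only shrinks the admissible class, so the supremum over IC, IR, NPT mechanisms is at most $\textsc{Rev}(X)$. For the reverse inequality, I would use a uniform payment-shift construction. Given any IC, IR mechanism $\mu = (q,s)$, set $c := b(0) = -s(0)$, which is nonnegative by IR at $0$, and define $\mu' := (q,\, s + c)$. IC is preserved because both sides of the IC inequality shift by the same constant. IR of $\mu'$ follows from IC of $\mu$ applied with report $\tilde x = 0$: $b(x) \geq q(0)\cdot x - s(0) = q(0)\cdot x + c \geq c$, using $q(0), x \geq 0$ componentwise, so $b'(x) = b(x) - c \geq 0$. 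NPT of $\mu'$ follows from IC of $\mu$ applied with true type $0$ and report $\tilde x = x$: $b(0) \geq q(x)\cdot 0 - s(x) = -s(x)$, so $s(x) \geq -c$ and hence $s'(x) = s(x)+c \geq 0$. Finally $R(\mu'; X) = R(\mu; X) + c \geq R(\mu; X)$, giving the reverse inequality.

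Neither step presents a genuine obstacle. The main thing to keep track of in (ii) is the sign structure that makes both verifications work: $q(0)\cdot x \geq 0$ on $\mathbb{R}_+^k$ gives IR, while the reciprocal use of IC with true type $0$ versus with report $0$ gives NPT.
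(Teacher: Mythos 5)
The paper does not prove this proposition; it simply cites it from Hart and Nisan (2017, Propositions 5 and 6). Your blind proof is correct and is essentially the standard argument for both parts. In (i), the one-line rewriting using $s(\tilde x) = q(\tilde x)\cdot\tilde x - b(\tilde x)$ is exactly the well-known equivalence between the IC inequality and the subgradient inequality, and the convexity conclusion follows as you say (either from "subgradient at every point implies convex" or from the envelope representation of $b$ as a pointwise supremum of affine functions). In (ii), the payment shift by $c = b(0) = -s(0) \geq 0$ is the canonical normalization; your verification that IC is preserved, that IR follows from IC with report $\tilde x = 0$ together with $q(0)\cdot x \geq 0$ on $\mathbb{R}_+^k$, and that NPT follows from IC with true type $0$, is exactly the right bookkeeping, and the revenue can only increase since $c \geq 0$. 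There is no gap; this matches the argument one finds in the cited source.
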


\subsection{Distributions\label{sus:continuous distributions}}

As we show formally in Appendix \ref{sus:wlog-continuity}, for the results
of the present paper we can limit ourselves without loss of generality to
valuations that admit a density function (this follows from general
continuity properties of the revenue, which we prove in Appendix \ref%
{s:continuity}, and are of independent interest).

In what follows we thus assume that every nonnegative random variable $X$
has an absolutely continuous cumulative distribution function, $F(t)=\mathbb{%
P}\left[ X\leq t\right] =\mathbb{P}\left[ X<t\right] ,$ with an associated
density function $f(t)$. We denote by $G$ the tail probability, i.e.,%
\begin{equation*}
G(t)%
{\;:=\;}%
1-F(t)=\int_{t}^{\infty }f(u)\mathrm{d}u=\mathbb{P}\left[ X\geq t\right] ,
\end{equation*}%
and by $H$ the cumulative tail probability, i.e.,%
\begin{equation}
H(t)%
{\;:=\;}%
\int_{0}^{t}G(u)\mathrm{d}u=\mathbb{E}\left[ \min \{X,t\}\right]
\label{eq:H}
\end{equation}%
(the equality holds since $\mathbb{E}\left[ \min \{X,t\}\right]
=\int_{0}^{\infty }\mathbb{P}\left[ \min \{X,t\}\geq u\right] \mathrm{d}%
u=\allowbreak $\linebreak $\int_{0}^{t}\mathbb{P}\left[ X\geq u\right] 
\mathrm{d}u=\int_{0}^{t}G(u)\mathrm{d}u).$

Let $r:=$\textsc{Rev}$(X)>0$ be\footnote{%
The continuity of $F$ implies that $X$ cannot be identically zero, and so
the optimal revenue \textsc{Rev}$(X)$ must be positive (just sell the good
at a small enough positive price).} the optimal revenue from $X;$ then (\ref%
{eq:one good}) implies $G(t)\leq r/t,$ which together with $G(t)\leq 1$ gives%
\begin{equation*}
G(t)\leq \min \left\{ \frac{r}{t},1\right\} .
\end{equation*}%
Therefore%
\begin{equation}
H(t)\leq \int_{0}^{r}1\mathrm{d}u+\int_{r}^{t}\frac{r}{u}\mathrm{d}u=r+r\log 
\frac{t}{r},  \label{eq:bound-H-0}
\end{equation}%
for every $t\geq r$ (and $H(t)\leq t$ for $t\leq r)$.

\subsection{Change of Units\label{sus:lambdas}}

We start with a trivial, but useful, change of units. For every $0<\lambda
_{1},...,\lambda _{k}\leq 1,$ let $\mathcal{M}_{\lambda _{1},...,\lambda
_{k}}$ denote the set of all IC and IR mechanisms $\mu =(q,s)$ that satisfy $%
q_{i}(x)\in \lbrack 0,\lambda _{i}]$ (instead of $q_{i}(x)\in \lbrack 0,1])$
for every $x\in \mathbb{R}_{+}^{k}$ and $i=1,...,k.$ The set of all IC and
IR mechanisms, which we denote by $\mathcal{M},$ is thus the same as $%
\mathcal{M}_{1,...,1}.$

\begin{lemma}
\label{l:rescale-rev}For every $0<\lambda _{1},...,\lambda _{k}\leq 1$ we
have%
\begin{equation*}
\text{\textsc{Rev}}(X_{1},...,X_{k})=\sup_{\mu \in \mathcal{M}_{\lambda
_{1},...,\lambda _{k}}}R(\mu ;\tilde{X}_{1},...,\tilde{X}_{k}),
\end{equation*}%
where $\tilde{X}_{i}:=(1/\lambda _{i})X_{i}$ for $i=1,...,k.$
\end{lemma}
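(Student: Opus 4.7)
The plan is to produce an explicit bijection between mechanisms in $\mathcal{M}$ (applied to $X$) and mechanisms in $\mathcal{M}_{\lambda_1,\ldots,\lambda_k}$ (applied to $\tilde{X}$) that preserves the expected revenue, so taking suprema on both sides gives the claimed equality. The underlying observation is that dividing each value $X_i$ by $\lambda_i$ (which inflates the reported valuations) is exactly compensated by multiplying each allocation probability $q_i$ by $\lambda_i$ (which shrinks the allocation), since what enters both IC/IR and the seller's revenue is only the bilinear form $q(x)\cdot x$ and the scalar payment $s(x)$.

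Concretely, given $\mu=(q,s)\in\mathcal{M}$, I would define $\tilde{\mu}=(\tilde{q},\tilde{s})$ on $\mathbb{R}_+^k$ by
\[
\tilde{q}_i(\tilde{x}):=\lambda_i\, q_i(\lambda_1\tilde{x}_1,\ldots,\lambda_k\tilde{x}_k),\qquad \tilde{s}(\tilde{x}):=s(\lambda_1\tilde{x}_1,\ldots,\lambda_k\tilde{x}_k).
\]
Since $q_i(\cdot)\in[0,1]$, automatically $\tilde{q}_i(\cdot)\in[0,\lambda_i]$, so $\tilde{\mu}\in\mathcal{M}_{\lambda_1,\ldots,\lambda_k}$. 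Writing $\lambda\odot\tilde{x}:=(\lambda_1\tilde{x}_1,\ldots,\lambda_k\tilde{x}_k)$, a one-line computation shows that for every pair of reports $\tilde{x},\tilde{y}\in\mathbb{R}_+^k$,
\[
\tilde{q}(\tilde{y})\cdot\tilde{x}-\tilde{s}(\tilde{y}) \;=\; q(\lambda\odot\tilde{y})\cdot(\lambda\odot\tilde{x})-s(\lambda\odot\tilde{y}).
\]
Because the map $\tilde{x}\mapsto\lambda\odot\tilde{x}$ is a bijection of $\mathbb{R}_+^k$, IR and IC of $\mu$ at the valuation $x:=\lambda\odot\tilde{x}$ translate exactly into IR and IC of $\tilde{\mu}$ at $\tilde{x}$. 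Moreover, since $X=\lambda\odot\tilde{X}$ by definition of $\tilde{X}_i=X_i/\lambda_i$, one gets $R(\tilde{\mu};\tilde{X})=\mathbb{E}[\tilde{s}(\tilde{X})]=\mathbb{E}[s(X)]=R(\mu;X)$. Taking the supremum over $\mu\in\mathcal{M}$ yields the inequality $\textsc{Rev}(X)\le\sup_{\tilde{\mu}\in\mathcal{M}_{\lambda_1,\ldots,\lambda_k}}R(\tilde{\mu};\tilde{X})$.

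The reverse inequality is obtained by the inverse construction: given $\tilde{\mu}=(\tilde{q},\tilde{s})\in\mathcal{M}_{\lambda_1,\ldots,\lambda_k}$, set $q_i(x):=(1/\lambda_i)\tilde{q}_i(x_1/\lambda_1,\ldots,x_k/\lambda_k)$ and $s(x):=\tilde{s}(x_1/\lambda_1,\ldots,x_k/\lambda_k)$; then $q_i(\cdot)\in[0,1]$ because $\tilde{q}_i(\cdot)\in[0,\lambda_i]$, and the same identity yields IC/IR of $\mu$ on $X$ together with $R(\mu;X)=R(\tilde{\mu};\tilde{X})$. I do not expect any real obstacle here; the entire proof is the observation that the two rescalings cancel in the bilinear pairing $q(x)\cdot x$ and leave the payment $s$ untouched, so IC/IR and revenue are invariant under the change of units.
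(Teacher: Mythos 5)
Your proposal is correct and takes essentially the same approach as the paper: define the explicit bijection between mechanisms in $\mathcal{M}$ (on $X$) and mechanisms in $\mathcal{M}_{\lambda_1,\ldots,\lambda_k}$ (on $\tilde{X}$) that rescales allocations by $\lambda_i$ and valuations by $1/\lambda_i$, observe that IC, IR, and the expected payment are all invariant, and take suprema. The paper simply starts from the $\mathcal{M}_{\lambda_1,\ldots,\lambda_k}$ side and notes the converse, whereas you start from $\mathcal{M}$; the construction is identical.
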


\begin{proof}
Given $\mu =(q,s)$ with $q_{i}(x)\in \lbrack 0,\lambda _{i}]$ for all $i,$
define $\hat{\mu}=(\hat{q},\hat{s})$ by $\hat{q}_{i}(x_{1},...,x_{k}):=(1/%
\lambda _{i})q_{i}(x_{1}/\lambda _{1},...,x_{k}/\lambda _{k})\in \lbrack
0,1] $ and $\hat{s}(x_{1},...,x_{k}):=s(x_{1}/\lambda _{1},...,x_{k}/\lambda
_{k}) $ (and thus $\hat{b}(x_{1},...,x_{k})=b(x_{1}/\lambda
_{1},...,x_{k}/\lambda _{k})$ for the corresponding buyer's payoff
functions). It is immediate to see that $\hat{\mu}$ is IC and IR if and only
if $\mu $ is IC and IR, and that $\mathbb{E}\left[ s(\tilde{X}_{1},...,%
\tilde{X}_{k})\right] =\mathbb{E}\left[ \hat{s}(X_{1},...,X_{k})\right] .$
Conversely, given $\hat{\mu}$ one generates $\mu $ by the reverse
transformation.
\end{proof}

\section{Overview of the Proof\label{sus:overview}}

The first part of the proof is similar to the proofs of Theorems A and B in
Hart and Nisan (2017)\footnote{%
The reader is encouraged to look at these proofs and the explanations there.}
except that, where they split the buyer's space of values $(x_{1},x_{2})\in 
\mathbb{R}_{+}^{2}$ in half along the diagonal $x_{1}=x_{2}$, we split the
space into two regions $x_{1}\geq \lambda x_{2}$ and $x_{1}<\lambda x_{2}$
along a possibly nonsymmetric diagonal $x_{1}=\lambda x_{2}$ (the precise
value of $\lambda $ will be chosen later). For any two-good mechanism, the
revenue in each of the two regions can be estimated by constructing from it
appropriate one-good mechanisms, which eventually leads to a key bound: see
Proposition \ref{p:revenue} in Section \ref{s:bound}. (Rather than working
directly with the two asymmetric regions, which is cumbersome, the proof
simplifies computations by first making an appropriate change of units,
which amounts to rescaling the probabilities that the goods are received:
see Lemma \ref{l:rescale-rev} in Section \ref{sus:lambdas}.) Once we have
the bound given in Proposition \ref{p:revenue}, we need to estimate the
maximum of a certain integral expression---which is essentially the
additional revenue that is achievable beyond the separate one-good revenues%
\textbf{---}over pairs of nonnegative functions $\varphi _{1},\varphi _{2}$
whose sum $\varphi _{1}+\varphi _{2}$ is nondecreasing. This is accomplished
in Proposition \ref{p:I-lambda}\textbf{, }by considering the appropriate
extreme functions and then carefully estimating the relevant terms (this is
the hardest part of the proof). In Section \ref{s:proof of thm A} we put
everything together, and, by choosing the best possible $\lambda $
(specifically,\textbf{\ }$\lambda =1/\sqrt{e}),$ prove the $62\%$ bound
(Theorem \ref{th:62.2}). Then in Section \ref{s:73} we show the $73\%$ bound
for Myerson-regular goods (Theorem \ref{th:regular}), and then we also deal
with monotonic mechanisms. There are two appendices: Appendix \ref%
{s:continuity} establishes that, under quite permissive conditions, the
seller's revenue is continuous in the distribution of the buyer's valuation,
a result that we use in our proof, but that is also of independent interest,
and Appendix \ref{s:nonsymm} provides a simple illustration of how the
\textquotedblleft nonsymmetric diagonal" construct alone can produce useful
bounds.

\section{Bounding the Revenue by Nonsymmetric Decomposition\label{s:bound}}

This section provides the basic decomposition\textbf{\ }with respect to a%
\textbf{\ }\emph{nonsymmetric}\textbf{\ }diagonal (equivalently, we make a
corresponding change of units and use the symmetric diagonal; see Section %
\ref{sus:lambdas}).\textbf{\ }

Given a two-good random valuation $(X_{1},X_{2}),$ for $i=1,2$ let $F_{i}$
denote the cumulative distribution function of $X_{i},$ and let $%
f_{i},G_{i}, $ and $H_{i}$ be the associated funtions as defined in Section %
\ref{sus:continuous distributions} (namely, the density, tail probability,
and cumulative tail probability functions, respectively). We let $r_{i}:=$%
\textsc{Rev}$(X_{i})$ be the optimal revenue that can be obtained from good $%
i,$ and define two useful auxiliary functions $K_{1}$ and $K_{2}$: 
\begin{eqnarray}
&&K_{1}(t)%
{\;:=\;}%
f_{2}(t)(H_{1}(t)-r_{1})-G_{1}(t)G_{2}(t),  \label{eq:K1} \\
&&K_{2}(t)%
{\;:=\;}%
f_{1}(t)(H_{2}(t)-r_{2})-G_{1}(t)G_{2}(t).  \label{eq:K2}
\end{eqnarray}

The following lemma, which slightly generalizes Lemma 19 in Hart and Nisan
(2017) (it replaces the factor $1-q(x_{0})$\ there with $\lambda -q(x_{0})$
here), obtains a better bound on the revenue of a mechanism by
\textquotedblleft rescaling" its allocation function $q$ so that it covers
the entire interval $[0,\lambda ].$

\begin{lemma}
\label{l:constrained-R1}Let $X$ be a one-good random valuation with values
bounded from below by some $x_{0}\geq 0.$ Then for every IC mechanism $\mu
=(q,s)$ that satisfies $q(x)\leq \lambda $ for all\footnote{%
It suffices to require $q(x)\leq \lambda $ for $x$ in the support of $X.$ As
in Hart and Reny (2015), one can always extend a $k$-good mechanism to the
whole space $\mathbb{R}_{+}^{k}$ without increasing its menu beyond taking
closure, and so the bound extends to all $\mathbb{R}_{+}^{k}.$} $x\geq x_{0}$
we have 
\begin{equation}
R(\mu ;X)\leq (\lambda -q(x_{0}))\text{\textsc{Rev}}(X)+s(x_{0}).
\label{eq:q0-b0}
\end{equation}
\end{lemma}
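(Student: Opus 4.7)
The plan is to reduce to the known inequality $R(\hat\mu;X)\le \textsc{Rev}(X)$ by constructing, from the given mechanism $\mu=(q,s)$, a rescaled and shifted single-good mechanism $\hat\mu=(\hat q,\hat s)$ on $\mathbb{R}_+$ whose expected revenue is $(R(\mu;X)-s(x_0))/(\lambda-q(x_0))$. If this construction produces a genuine IC, IR (and NPT) mechanism, the lemma follows by a single rearrangement. The edge case $\lambda=q(x_0)$ will be handled separately: Proposition~\ref{p:HN-p-5-6}(i) in one dimension forces $q$ to be nondecreasing, so $q(x_0)\le q(x)\le \lambda=q(x_0)$ pins $q$ to the constant $\lambda$ on $[x_0,\infty)$, and the standard IC payment formula then forces $s$ to be constant too, making \eqref{eq:q0-b0} an equality.

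For the main case $\lambda>q(x_0)$, I would set
\[
\hat q(x):=\frac{q(x)-q(x_0)}{\lambda-q(x_0)}\quad\text{for }x\ge x_0,\qquad \hat q(x):=0\quad\text{for }x<x_0,
\]
and define $\hat s$ via the standard one-good Myerson payment normalized so that $\hat b(x_0)=0$, namely $\hat s(x):=\hat q(x)\,x-\int_{x_0}^{x}\hat q(t)\,\mathrm{d}t$ on $[x_0,\infty)$ and $\hat s\equiv 0$ on $[0,x_0)$. A direct calculation, using that for the IC mechanism $\mu$ one has $s(x)=q(x)\,x-b(x_0)-\int_{x_0}^{x}q(t)\,\mathrm{d}t$ on $[x_0,\infty)$ and $b(x_0)=q(x_0)x_0-s(x_0)$, yields the clean identity
\[
\hat s(x)=\frac{s(x)-s(x_0)}{\lambda-q(x_0)}\quad\text{for all }x\ge x_0.
\]
Since $X\ge x_0$ almost surely, taking expectations gives $R(\hat\mu;X)=(R(\mu;X)-s(x_0))/(\lambda-q(x_0))$, which is the key relation.

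It then remains to verify that $\hat\mu$ is an admissible mechanism on all of $\mathbb{R}_+$. By construction $\hat q\in[0,1]$; monotonicity of $\hat q$ on $[x_0,\infty)$ follows from monotonicity of $q$ (itself a consequence of Proposition~\ref{p:HN-p-5-6}(i) in one dimension), and on $[0,x_0)$ it is trivial. IC on $[x_0,\infty)$ is just the standard one-dimensional Myerson characterization, and IR on $[x_0,\infty)$ holds since $\hat b(x_0)=0$ and $\hat b$ is nondecreasing (derivative $\hat q\ge 0$). The only genuinely fiddly step—and the part I expect to take the most care—is checking IC/IR/NPT across the cut at $x_0$: here one uses $\hat s(x)=\hat q(x)\,x-\int_{x_0}^{x}\hat q(t)\,\mathrm{d}t\ge \hat q(x)\,x_0\ge 0$ for $x\ge x_0$ (NPT), together with the fact that a type $x<x_0$ who reports $\tilde x\ge x_0$ obtains payoff $\hat q(\tilde x)\,x-\hat s(\tilde x)\le \hat q(\tilde x)(x-x_0)\le 0$, so truthful reporting is weakly optimal for all types. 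With $\hat\mu$ shown to be IC, IR, NPT, Proposition~\ref{p:HN-p-5-6}(ii) gives $R(\hat\mu;X)\le \textsc{Rev}(X)$, and multiplying by $\lambda-q(x_0)$ and adding $s(x_0)$ yields \eqref{eq:q0-b0}.
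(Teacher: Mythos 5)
Your proposal is correct and constructs the same rescaled mechanism as the paper: the paper sets $\hat q(x):=\theta(q(x)-q(x_0))$, $\hat s(x):=\theta(s(x)-s(x_0))$ with $\theta=1/(\lambda-q(x_0))$ and checks IC via the one-line deviation identity $[\hat q(x)x-\hat s(x)]-[\hat q(x')x-\hat s(x')]=\theta([q(x)x-s(x)]-[q(x')x-s(x')])\ge 0$, whereas you reach the identical $\hat s$ on $[x_0,\infty)$ by routing through the Myerson envelope formula and verify IC via monotonicity-plus-envelope. This is essentially the same argument, reaching the same inequality $\textsc{Rev}(X)\ge\mathbb{E}[\hat s(X)]=\theta(\mathbb{E}[s(X)]-s(x_0))$; your extra care with the extension below $x_0$ and with NPT is harmless but not needed, since $X\ge x_0$ almost surely and the bound only requires IC and IR.
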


\begin{proof}
The function $q$ is nondecreasing (because $q$ is the derivative of the
buyer's payoff function $b,$ which is convex), and so $q(x_{0})\leq q(x)\leq
\lambda $ for all\footnote{%
If the values of $X$ are bounded from above by some finite $x_{1}$, then we
can replace $\lambda $ with $q(x_{1}).$} $x\geq x_{0}.$

If $q(x_{0})=\lambda $ then $q(x)=q(x_{0})=\lambda $ for all $x\geq x_{0},$
hence $s(x)=s(x_{0})$ for all $x\geq x_{0}$ by IC; therefore $\mathbb{E}%
\left[ s(X)\right] =s(x_{0})$ and (\ref{eq:q0-b0}) holds as equality.

If $q(x_{0})<\lambda $ then define a new mechanism $\hat{\mu}=(\hat{q},\hat{s%
})$ by $\hat{q}(x):=\theta (q(x)-q(x_{0}))$ and $\hat{s}(x):=\theta
(s(x)-s(x_{0})),$ and thus $\hat{b}(x):=\theta
(b(x)-(x-x_{0})q(x_{0})-b(x_{0})),$ where $\theta :=1/(\lambda -q(x_{0}))>0$
(so that $0\leq \hat{q}(x)\leq 1)$. It is immediate to verify that $(\hat{q},%
\hat{s})$ is an IC and IR mechanism: indeed, $[\hat{q}(x)\cdot x-\hat{s}%
(x)]-[\hat{q}(x^{\prime })\cdot x-\hat{s}(x^{\prime })]=\theta \left(
\lbrack q(x)\cdot x-s(x)]-[q(x^{\prime })\cdot x-s(x^{\prime })]\right) \geq
0,$ and $\hat{b}(x_{0})=0.$ Therefore \textsc{Rev}$(X)\geq \mathbb{E}\left[ 
\hat{s}(X)\right] =\theta (\mathbb{E}\left[ s(X)\right] -s(x_{0})),$ which
yields (\ref{eq:q0-b0}).
\end{proof}

\bigskip

We now come to the main result of this section, which generalizes the
decomposition of the proofs of Theorems A and B in Hart and Nisan (2017)%
\textbf{: }the revenue from two goods is bounded by the sum of the separate
one-good revenues and an additional term (the $K_{i}$-term), which will be
estimated in the next section.

\begin{proposition}
\label{p:revenue}Let $X=(X_{1},X_{2})$ be a two-good random valuation with
independent goods (i.e., $X_{1}$ and $X_{2}$ are independent nonnegative
real random variables), and let $\mu =(q,s)$ be a two-good IC, IR, and NPT
mechanism that satisfies $q_{i}(x)\leq \lambda _{i}$ for all $x\in \mathbb{R}%
_{+}^{2}$ and $i=1,2.$ Then there exist functions $\varphi _{i}:\mathbb{R}%
_{+}\rightarrow \lbrack 0,\lambda _{i}]$ for $i=1,2$ such that $\varphi
_{1}+\varphi _{2}$ is a nondecreasing function and 
\begin{equation}
R(\mu ;X_{1},X_{2})\leq \lambda _{1}r_{1}+\lambda _{2}r_{2}+\int_{0}^{\infty
}(\varphi _{1}(t)K_{1}(t)+\varphi _{2}(t)K_{2}(t))\mathrm{d}t;
\label{eq:revenue}
\end{equation}%
specifically, $\varphi _{i}(t)=q_{i}(t,t).$
\end{proposition}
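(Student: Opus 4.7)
The plan is to decompose the revenue along the diagonal $\{x_{1}=x_{2}\}$, treating each half by reducing to a one-good problem and applying Lemma~\ref{l:constrained-R1}. Write $R(\mu ;X_{1},X_{2})=A+B$, where $A:=\int_{0}^{\infty }f_{2}(t)\int_{t}^{\infty }s(x_{1},t)f_{1}(x_{1})\,\mathrm{d}x_{1}\,\mathrm{d}t$ covers the half $\{x_{1}\geq x_{2}\}$ and $B$ is defined symmetrically. For each slice $x_{2}=t$ of $A$, consider the one-good mechanism $\hat{\mu}_{t}(x_{1}):=(q_{1}(x_{1},t),\,s(x_{1},t)-t\,q_{2}(x_{1},t))$; its buyer payoff equals $b(x_{1},t)$, so IC is inherited from $\mu $, and $\hat{q}_{t}\leq \lambda _{1}$. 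Lemma~\ref{l:constrained-R1} at $x_{0}=t$ then gives
\[
\int_{t}^{\infty }\hat{s}_{t}(x_{1})f_{1}(x_{1})\,\mathrm{d}x_{1}\leq (\lambda _{1}-\varphi _{1}(t))\sup_{p\geq t}p\,G_{1}(p)+G_{1}(t)\hat{s}_{t}(t),
\]
with $\varphi _{1}(t)=q_{1}(t,t)$ and the supremum $\leq r_{1}$. An analogous bound holds for $B$ after swapping the two goods.

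Integrating against $f_{2}(t)\,\mathrm{d}t$ and $f_{1}(t)\,\mathrm{d}t$ respectively, and unpacking $\hat{s}_{t}(t)=s(t,t)-t\varphi _{2}(t)$, we see that $A+B$ is dominated by four kinds of terms: (i) $\sum_{i\neq j}\int (\lambda _{i}-\varphi _{i})r_{i}f_{j}\,\mathrm{d}t$ (after using $\sup_{p\geq t}pG_{i}(p)\leq r_{i}$); (ii) the anchor $\int s(t,t)[f_{2}G_{1}+f_{1}G_{2}]\,\mathrm{d}t$; (iii) $-\int t[G_{1}\varphi _{2}f_{2}+G_{2}\varphi _{1}f_{1}]\,\mathrm{d}t$; and (iv) two off-diagonal cross terms of the form $\int_{0}^{\infty }tf_{j}(t)\int_{t}^{\infty }q_{i}(\cdots ,t)f_{i}\,\mathrm{d}x\,\mathrm{d}t$ that still depend on $\mu $ away from the diagonal.

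The delicate step is merging (ii)--(iv) into the $-G_{1}G_{2}$ correction that sits inside each $K_{i}$. For (iv), IC (convexity of $b$) makes each $q_{i}$ nondecreasing in $x_{i}$, so $q_{2}(x_{1},t)\leq q_{2}(x_{1},x_{1})=\varphi _{2}(x_{1})$ for $x_{1}\geq t$ (and symmetrically for $q_{1}$); a Fubini swap combined with the identity $\int_{0}^{T}tf_{j}(t)\,\mathrm{d}t=H_{j}(T)-TG_{j}(T)$ converts (iv) into $\int [\varphi _{1}f_{2}(H_{1}-tG_{1})+\varphi _{2}f_{1}(H_{2}-tG_{2})]\,\mathrm{d}t$. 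For (ii), use the diagonal identity $s(t,t)=t(\varphi _{1}(t)+\varphi _{2}(t))-b(t,t)$ together with $\tfrac{d}{dt}b(t,t)=\varphi _{1}(t)+\varphi _{2}(t)$ (the directional derivative of the convex $b$ along $(1,1)$, whose nondecreasingness in $t$ also delivers the final assertion of the proposition). Hence $\mathrm{d}s(t,t)=t\,\mathrm{d}(\varphi _{1}+\varphi _{2})$, and since $f_{2}G_{1}+f_{1}G_{2}=-(G_{1}G_{2})^{\prime }$, two integrations by parts (the boundary terms vanish because $G_{1}(t)G_{2}(t)s(t,t)=O(1/t)$ at infinity and $s(0,0)=0$) rewrite (ii) as $-\int (\varphi _{1}+\varphi _{2})G_{1}G_{2}\,\mathrm{d}t+\int t(\varphi _{1}+\varphi _{2})(f_{2}G_{1}+f_{1}G_{2})\,\mathrm{d}t$.

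Assembling: the $tG_{i}\varphi _{j}$ pieces from (ii), (iii) and (iv) cancel exactly, leaving $\int [\varphi _{1}f_{2}H_{1}+\varphi _{2}f_{1}H_{2}]\,\mathrm{d}t-\int (\varphi _{1}+\varphi _{2})G_{1}G_{2}\,\mathrm{d}t$. Adding (i)$\,=\lambda _{1}r_{1}+\lambda _{2}r_{2}-r_{1}\int \varphi _{1}f_{2}\,\mathrm{d}t-r_{2}\int \varphi _{2}f_{1}\,\mathrm{d}t$ and invoking the definitions (\ref{eq:K1})--(\ref{eq:K2}) reproduces exactly $\lambda _{1}r_{1}+\lambda _{2}r_{2}+\int [\varphi _{1}K_{1}+\varphi _{2}K_{2}]\,\mathrm{d}t$, which is (\ref{eq:revenue}). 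The main obstacle is precisely this bookkeeping: the $-G_{1}G_{2}$ correction inside each $K_{i}$ is exactly what is lost if (ii) is replaced by the crude pointwise IR bound $s(t,t)\leq t(\varphi _{1}+\varphi _{2})$, so the exact integration-by-parts identity for $s(t,t)$ along the diagonal is essential.
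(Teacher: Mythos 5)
Your proof is correct and follows essentially the same route as the paper: decompose $R(\mu;X)$ along the diagonal $x_1=x_2$, on each slice $\{x_2=t\}$ replace the allocation of good $2$ by an equivalent payment reduction to get a one-good mechanism, apply Lemma~\ref{l:constrained-R1} at $x_0=t$ together with $\sup_{p\geq t}pG_1(p)\leq r_1$, and use the convexity of $b$ (monotonicity of $q_2$ in $x_2$) to pull the off-diagonal cross term back to the diagonal. The only presentational difference is in the final aggregation: the paper writes everything via $\Phi(t)=b(t,t)$ and the single random variable $\Lambda=\min\{Y,Z\}$, performing one integration by parts on $\mathbb{E}[\Phi(\Lambda)]$, whereas you keep $s(t,t)$ explicit and do two integrations by parts (using $ds(t,t)=t\,d(\varphi_1+\varphi_2)$) to reach the same cancellation; both are valid and yield the identical $K_i$ terms.
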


\begin{proof}
The first part of the proof, which yields (\ref{eq:E[s]}), goes along the
same lines as the proof of Theorem B in Appendix A.1 of Hart and Nisan
(2017), but with the appropriate modifications, because here $X_{1}$ and $%
X_{2}$ are not identically distributed, the mechanism $\mu $ is not
symmetric, and each $q_{i}$ is bounded by $\lambda _{i}.$

We will write $Y$ for $X_{1}$ and $Z$ for $X_{2},$ and so $X=(Y,Z).$

For every $t\geq 0$ define\footnote{%
Notice that $\Phi $ here is $2\Phi $ in Hart and Nisan (2017).} $\Phi
(t):=b(t,t)$ and $\varphi _{i}(t):=q_{i}(t,t).$ By Proposition \ref%
{p:HN-p-5-6}(i) the function $\Phi $ is convex and $q(t,t)=(\varphi
_{1}(t),\varphi _{2}(t))$ is a subgradient of $b$ at $(t,t)$, and so $%
\varphi _{1}(t)+\varphi _{2}(t)$ is a subgradient of $\Phi $ at $t.$
Therefore $\varphi _{1}+\varphi _{2}$ is a nondecreasing function, and $\Phi
(u)=\int_{0}^{u}(\varphi _{1}(t)+\varphi _{2}(t))$\textrm{d}$t$ (use
Corollary 24.2.1 in Rockafellar 1970, recalling that $\Phi (0)=b(0,0)=0$ by
NPT).

Consider first the region $Y\geq Z.$ For each fixed value $z\geq 0$ of the
second good such that $\mathbb{P}\left[ Y\geq z\right] >0,$ define a
mechanism $\mu ^{z}=(q^{z},s^{z})$\ for the first good by replacing the
allocation of the second good with an equivalent decrease in payment; that
is, the allocation of the first good is unchanged, i.e., $%
q^{z}(y):=q_{1}(y,z),$\ and the payment is $s^{z}(y):=s(y,z)-zq_{2}(y,z),$\
for every $y\geq 0;$\ note that the buyer's payoff remains the same: $%
b^{z}(y)=b(y,z)$. The mechanism $\mu ^{z}$ is IC and IR for $y$, since $\mu $
is IC and IR for $(y,z)$. Let $Y^{z}$ denote the random variable $Y$
conditional on the event $Y\geq z,$ and consider the revenue $R(\mu
^{z};Y^{z})=\mathbb{E}\left[ s^{z}(Y^{z})\right] =\mathbb{E}\left[
s^{z}(Y)|Y\geq z\right] $ of $\mu ^{z}$ from $Y^{z}$. We have $Y^{z}\geq z,$ 
$q^{z}(z)=q_{1}(z,z)=\varphi _{1}(z),$ and $%
s^{z}(z)=s(z,z)-zq_{2}(z,z)=zq_{1}(z,z)-b(z,z)=z\varphi _{1}(z)-\Phi (z),$
and so, applying Lemma \ref{l:constrained-R1} above to $Y^{z},$ we have 
\begin{equation}
\mathbb{E}\left[ s^{z}(Y)|Y\geq z\right] \leq (\lambda _{1}-\varphi _{1}(z))%
\text{\textsc{Rev}}(Y^{z})+z\varphi _{1}(z)-\Phi (z).  \label{eq:X-tilda}
\end{equation}%
Since $\mathbb{P}\left[ Y_{z}\geq t\right] =\mathbb{P}\left[ Y\geq t\right] /%
\mathbb{P}\left[ Y\geq z\right] =G_{1}(t)/\mathbb{P}\left[ Y\geq z\right] $
for all $t\geq z,$ using (\ref{eq:one good}) we get%
\begin{equation*}
\text{\textsc{Rev}}(Y^{z})=\sup_{t\geq 0}t\cdot \mathbb{P}\left[ Y^{z}\geq t%
\right] =\sup_{t\geq z}t\cdot \frac{G_{1}(t)}{\mathbb{P}\left[ Y\geq z\right]
}\leq \frac{\sup_{t\geq 0}t\cdot G_{1}(t)}{\mathbb{P}\left[ Y\geq z\right] }=%
\frac{r_{1}}{\mathbb{P}\left[ Y\geq z\right] }
\end{equation*}%
(recall that $r_{1}=$\textsc{Rev}$(Y)).$ Substitute this into (\ref%
{eq:X-tilda}), and multiply it by $\mathbb{P}\left[ Y\geq z\right] ,$ to get%
\begin{equation*}
\mathbb{E}\left[ s^{z}(Y)\mathbf{1}_{Y\geq z}\right] \leq (\lambda
_{1}-\varphi _{1}(z))r_{1}+(z\varphi _{1}(z)-\Phi (z))\mathbb{P}\left[ Y\geq
z\right]
\end{equation*}%
for all $z\geq 0$ (which trivially includes those $z$ where $\mathbb{P}\left[
Y\geq z\right] =0).$ Taking expectation over the values $z$ of $Z$ yields%
\begin{equation}
\mathbb{E}\left[ s^{Z}(Y)\mathbf{1}_{Y\geq Z}\right] \leq \lambda
_{1}r_{1}-r_{1}~\mathbb{E}\left[ \varphi _{1}(Z)\right] +\mathbb{E}\left[
(Z\varphi _{1}(Z)-\Phi (Z)\mathbf{1}_{Y\geq Z}\right] .  \label{eq:e[s-hat]}
\end{equation}

For $y\geq z\geq 0$ we have $s(y,z)=s^{z}(y)+zq_{2}(y,z)\,\leq
s^{z}(y)+zq_{2}(y,y)=s^{z}(y)+z\varphi _{2}(y)$ (by the monotonicity of $%
q_{2}$ in its second variable, again by the convexity of $b),$ which
together with (\ref{eq:e[s-hat]}) yields%
\begin{eqnarray*}
\mathbb{E}\left[ s(Y,Z)\mathbf{1}_{Y\geq Z}\right] &\leq &\mathbb{E}\left[
s^{Z}(Y)\mathbf{1}_{Y\geq Z}\right] +\mathbb{E}\left[ Z\varphi _{2}(Y)%
\mathbf{1}_{Y\geq Z}\right] \\
&\leq &\lambda _{1}r_{1}-r_{1}~\mathbb{E}\left[ \varphi _{1}(Z)\right] +%
\mathbb{E}\left[ \left( Z\varphi _{2}(Y)+Z\varphi _{1}(Z)-\Phi (Z)\right) 
\mathbf{1}_{Y\geq Z}\right] \\
&=&\lambda _{1}r_{1}-r_{1}~\mathbb{E}\left[ \varphi _{1}(Z)\right] +\mathbb{E%
}\left[ \left( \Lambda \varphi _{2}(Y)+\Lambda \varphi _{1}(Z)-\Phi (\Lambda
)\right) \mathbf{1}_{Y\geq Z}\right] ,
\end{eqnarray*}%
where we put $\Lambda :=\min \{Y,Z\}.$

Consider next the $Z>Y.$ Interchanging Y and Z and using $Z>y$ instead of $%
Z\geq y$ throughout gives%
\begin{equation*}
\mathbb{E}\left[ s(Y,Z)\mathbf{1}_{Z>Y}\right] \leq \lambda _{2}r_{2}-r_{2}~%
\mathbb{E}\left[ \varphi _{2}(Y)\right] +\mathbb{E}\left[ \left( \Lambda
\varphi _{1}(Z)+\Lambda \varphi _{2}(Y)-\Phi (\Lambda )\right) \mathbf{1}%
_{Z>Y}\right] .
\end{equation*}%
Adding the last two inequalities yields%
\begin{eqnarray}
\mathbb{E}\left[ s(Y,Z)\right] &\leq &\lambda _{1}r_{1}+\lambda
_{2}r_{2}-r_{1}~\mathbb{E}\left[ \varphi _{1}(Z)\right] -r_{2}~\mathbb{E}%
\left[ \varphi _{2}(Y)\right]  \notag \\
&&+\mathbb{E}\left[ \Lambda \varphi _{1}(Z)+\Lambda \varphi _{2}(Y)-\Phi
(\Lambda )\right]  \notag \\
&=&\lambda _{1}r_{1}+\lambda _{2}r_{2}  \notag \\
&&+\mathbb{E}\left[ \varphi _{1}(Z)\left( \Lambda -r_{1}\right) \right] +%
\mathbb{E}\left[ \varphi _{2}(Y)\left( \Lambda -r_{2}\right) \right] -%
\mathbb{E}\left[ \Phi (\Lambda )\right] .  \label{eq:E[s]}
\end{eqnarray}

Now we have%
\begin{eqnarray}
\mathbb{E}\left[ \varphi _{1}(Z)\left( \Lambda -r_{1}\right) \right]
&=&\int_{0}^{\infty }\varphi _{1}(z)(\mathbb{E}\left[ \min \{Y,z\}\right]
-r_{1})f_{2}(z)\mathrm{d}z  \notag \\
&=&\int_{0}^{\infty }\varphi _{1}(z)(H_{1}(z)-r_{1})f_{2}(z)\mathrm{d}z
\label{eq:phi1}
\end{eqnarray}%
(use $\Lambda =\min \{Y,Z\}$ and (\ref{eq:H})).$\mathbb{\ }$Similarly,%
\begin{equation}
\mathbb{E}\left[ \varphi _{2}(Y)\left( \Lambda -r_{2}\right) \right]
=\int_{0}^{\infty }\varphi _{2}(y)(H_{2}(y)-r_{2})f_{1}(y)\mathrm{d}y.
\label{eq:phi2}
\end{equation}%
Let $F_{\Lambda }$ be the cumulative distribution function of $\Lambda =\min
\{Y,Z\};$ then\linebreak\ $1-F_{\Lambda }(u)=G_{\Lambda
}(u)=G_{1}(u)G_{2}(u),$ and 
\begin{eqnarray}
\mathbb{E}\left[ \Phi (\Lambda )\right] &=&\int_{0}^{\infty }\Phi (u)\mathrm{%
d}F_{\Lambda }(u)=-\int_{0}^{\infty }\Phi (u)\mathrm{d}G_{\Lambda }(u) 
\notag \\
&=&\left[ -\Phi (u)G_{\Lambda }(u)\right] _{0}^{\infty }+\int_{0}^{\infty
}\Phi ^{\prime }(u)G_{\Lambda }(u)\mathrm{d}u  \notag \\
&=&\int_{0}^{\infty }\Phi ^{\prime }(u)G_{\Lambda }(u)\mathrm{d}u  \notag \\
&=&\int_{0}^{\infty }(\varphi _{1}(u)+\varphi _{2}(u))G_{1}(u)G_{2}(u)%
\mathrm{d}u,  \label{eq:PHI}
\end{eqnarray}%
where we integrated by parts to get the second line,\footnote{%
Formally, we integrate by parts on a finite interval $[0,M]$ and then let $%
M\rightarrow \infty .$ The functions $G_{\Lambda }$ and $\Phi $ are
absolutely continuous (because $G_{i}=1-F_{i}$ for $i=1,2$ are absolutely
continuous and $G_{\Lambda }=G_{1}G_{2}$, and $\Phi $ is convex and
continuous).} and then used $\Phi (0)=0$ and $\Phi (\infty )G_{\Lambda
}(\infty )=0$ (because $0\leq \Phi (u)G_{\Lambda }(u)\leq
2u(r_{1}/u)(r_{2}/u)\rightarrow 0$ as $u\rightarrow \infty ,$ with $\Phi
(u)\leq 2u$ following from $\Phi ^{\prime }(u)\leq 2).$

Substituting (\ref{eq:phi1})--(\ref{eq:PHI}) into (\ref{eq:E[s]}) yields the
result.
\end{proof}

\section{Bounding the $K_{i}$-Term \label{s:K1-K2}}

In this section we bound from above the term $\int (\varphi
_{1}K_{1}+\varphi _{2}K_{2})$ in (\ref{eq:revenue}) over all possible
functions $\varphi _{i},$ which take values in $[0,\lambda _{i}],$ and whose 
\emph{sum} $\varphi _{1}+\varphi _{2}$ is nondecreasing. This term is linear
in the $\varphi _{i},$ and so, if each $\varphi _{i}$ were nondecreasing, it
would suffice to consider only the extreme functions that take the values $0$
and $\lambda _{i}$ (because any nondecreasing function is an average of such
functions; see the remark below). However, we only require the sum to be
nondecreasing, which requires a more delicate analysis; see Proposition \ref%
{p:I-lambda}. This result is then applied to our specific functions $K_{1}$
and $K_{2}$ to get the bound in Proposition \ref{p:bound-I} (this
constitutes the core of the proof).

From now on we will assume without loss of generality that $\lambda _{1}\leq
\lambda _{2},$ and so $0<\lambda _{1}\leq \lambda _{2}\leq 1.$ Let $%
K_{1},K_{2}:\mathbb{R}_{+}\rightarrow \mathbb{R}$ be two functions, and
define 
\begin{equation*}
I%
{\;:=\;}%
\sup_{\varphi _{1},\varphi _{2}}\int_{0}^{\infty }\left( \varphi
_{1}(t)K_{1}(t)+\varphi _{2}(t)K_{2}(t)\right) \mathrm{d}t,
\end{equation*}%
where the supremum is taken over all functions $\varphi _{i}:\mathbb{R}%
_{+}\rightarrow \lbrack 0,\lambda _{i}]$ such that $\varphi :=\varphi
_{1}+\varphi _{2}$ is a nondecreasing function.

To estimate $I,$ for any $0\leq a\leq b\leq c\leq \infty $ define%
\begin{eqnarray}
I(a,b,c) &%
{\;:=\;}%
&\int_{a}^{b}\lambda _{1}\max \{K_{1}(t),K_{2}(t)\}\mathrm{d}t  \notag \\
&&+\int_{b}^{c}\left( (\lambda _{2}-\lambda _{1})K_{2}(t)+\lambda _{1}\max
\{K_{1}(t),K_{2}(t)\}\right) \mathrm{d}t  \notag \\
&&+\int_{c}^{\infty }(\lambda _{1}K_{1}(t)+\lambda _{2}K_{2}(t))\mathrm{d}t 
\notag \\
&=&\lambda _{1}\int_{a}^{c}\max \{K_{1}(t),K_{2}(t)\}\mathrm{d}t+\lambda
_{1}\int_{c}^{\infty }(K_{1}(t)+K_{2}(t))\mathrm{d}t  \notag \\
&&+(\lambda _{2}-\lambda _{1})\int_{b}^{\infty }K_{2}(t)\mathrm{d}t.
\label{eq:Iabc}
\end{eqnarray}%
It is immediate to see that $I(a,b,c)$ is nothing other than $\int (\varphi
_{1}K_{1}+\varphi _{2}K_{2})$ for the following functions $\varphi _{1}$ and 
$\varphi _{2}$: 
\begin{equation*}
\renewcommand{\arraystretch}{1.5}%
\begin{tabular}{c||c|c|c|}
\cline{2-4}
& $a\leq t<b$ & $b\leq t<c$ & $t\geq c$ \\ \hline\hline
\multicolumn{1}{|c||}{$\varphi _{1}(t)$} & $\lambda _{1}\mathbf{1}%
_{K_{1}(t)\geq K_{2}(t)}$ & $\lambda _{1}\mathbf{1}_{K_{1}(t)\geq K_{2}(t)}$
& $\lambda _{1}$ \\ \hline
\multicolumn{1}{|c||}{$\varphi _{2}(t)$} & $\lambda _{1}\mathbf{1}%
_{K_{1}(t)<K_{2}(t)}$ & $\lambda _{1}\mathbf{1}_{K_{1}(t)<K_{2}(t)}+\lambda
_{2}-\lambda _{1}$ & $\lambda _{2}$ \\ \hline
\end{tabular}%
\renewcommand{\arraystretch}{1}%
\end{equation*}%
Their sum $\varphi _{1}+\varphi _{2}$ then equals%
\begin{equation*}
\renewcommand{\arraystretch}{1.5}%
\begin{tabular}{c||c|c|c|}
\cline{2-4}
& $a\leq t<b$ & $b\leq t<c$ & $t\geq c$ \\ \hline\hline
\multicolumn{1}{|c||}{$\varphi _{1}(t)+\varphi _{2}(t)$} & $\lambda _{1}$ & $%
\lambda _{2}$ & $\lambda _{1}+\lambda _{2}$ \\ \hline
\end{tabular}%
,%
\renewcommand{\arraystretch}{1}%
\end{equation*}%
which is a nondecreasing function, and so $I(a,b,c)\leq I.$

\bigskip

\begin{proposition}
\label{p:I-lambda}Let $0<\lambda _{1}\leq \lambda _{2}\leq 1.$ Then%
\begin{equation*}
I=\sup_{0\leq a\leq b\leq c\leq \infty }I(a,b,c).
\end{equation*}
\end{proposition}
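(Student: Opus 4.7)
The inequality $\sup_{a \le b \le c} I(a,b,c) \le I$ was already shown in the paragraph preceding the proposition, so the plan is to prove the reverse direction $I \le \sup_{a \le b \le c} I(a,b,c)$ in two reductions.

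First, I perform a pointwise optimization in $(\varphi_1,\varphi_2)$ given the sum $\varphi := \varphi_1 + \varphi_2$. For each $t$, the maximizer of $\psi_1 K_1(t) + \psi_2 K_2(t)$ over $\psi_1 + \psi_2 = \varphi(t)$ with $\psi_i \in [0,\lambda_i]$ puts as much mass as allowed in the component whose $K_i(t)$ is larger. Replacing $(\varphi_1,\varphi_2)$ by this pointwise maximizer $(\hat\varphi_1,\hat\varphi_2)$ preserves the sum $\varphi$ (hence admissibility) and weakly increases the integrand. A case analysis on the sign of $K_1(t) - K_2(t)$ shows the new integrand equals $g(t,\varphi(t))$, where, as a function of its second argument, $g(t,\cdot)$ is concave and piecewise linear on $[0,\lambda_1+\lambda_2]$ with slopes $\max\{K_1(t),K_2(t)\}$, $K_2(t)$, and $\min\{K_1(t),K_2(t)\}$ on $[0,\lambda_1]$, $[\lambda_1,\lambda_2]$, and $[\lambda_2,\lambda_1+\lambda_2]$, respectively. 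The key algebraic point is that the middle slope is always $K_2(t)$ regardless of the sign of $K_1(t)-K_2(t)$ (this uses $\lambda_1 \le \lambda_2$), which is exactly what produces the $(\lambda_2-\lambda_1) K_2$ term in $I(a,b,c)$.

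Second, I apply a layer-cake decomposition. Writing $g(t,\varphi(t)) = \int_0^{\varphi(t)} g'_s(t,s)\,ds$, applying Fubini, and using that $\varphi$ is nondecreasing so that $\{\varphi > s\} = (T(s),\infty)$ for $T(s) := \inf\{t : \varphi(t) > s\}$, the objective rewrites as
\begin{equation*}
J(\hat\varphi_1,\hat\varphi_2) = \int_0^{\lambda_1} F_1(T(s))\,ds + \int_{\lambda_1}^{\lambda_2} F_2(T(s))\,ds + \int_{\lambda_2}^{\lambda_1+\lambda_2} F_3(T(s))\,ds,
\end{equation*}
where $F_1(\tau) := \int_\tau^\infty \max\{K_1,K_2\}$, $F_2(\tau) := \int_\tau^\infty K_2$, and $F_3(\tau) := \int_\tau^\infty \min\{K_1,K_2\}$. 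Since $T$ is nondecreasing, for any $\varepsilon > 0$ I can pick $a$, $b$, $c$ in (the closures of) the ranges of $T$ on $[0,\lambda_1]$, $[\lambda_1,\lambda_2]$, $[\lambda_2,\lambda_1+\lambda_2]$ that come within $\varepsilon$ of maximizing $F_1$, $F_2$, $F_3$ over those ranges; monotonicity of $T$ then forces $a \le b \le c$. Bounding each summand by the interval length times the chosen $F_i$-value gives $J \le \lambda_1 F_1(a) + (\lambda_2-\lambda_1) F_2(b) + \lambda_1 F_3(c) + O(\varepsilon)$, and this quantity equals $I(a,b,c)$ after rearranging (\ref{eq:Iabc}) via the identity $\max\{K_1,K_2\} + \min\{K_1,K_2\} = K_1 + K_2$. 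Letting $\varepsilon \to 0$ and taking the supremum over admissible $(\varphi_1,\varphi_2)$ finishes the proof.

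The main obstacle is setting up the correct two-stage reduction: first collapsing the two-function problem to one in the single nondecreasing function $\varphi$ via the pointwise LP, and then trading $\varphi$ for its nondecreasing level-set inverse $T$, which is precisely what makes the ordering $a \le b \le c$ emerge automatically from the structure of the problem rather than being imposed as an extra constraint to be managed by hand.
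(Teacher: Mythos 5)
Your argument is correct, and its first stage---the pointwise linear programming in $(\varphi_1,\varphi_2)$ for fixed $\varphi=\varphi_1+\varphi_2$, producing the concave piecewise-linear envelope $g(t,\cdot)$ with slopes $\max\{K_1,K_2\}$, $K_2$, $\min\{K_1,K_2\}$ on $[0,\lambda_1]$, $[\lambda_1,\lambda_2]$, $[\lambda_2,\lambda_1+\lambda_2]$---coincides exactly with the paper's. Where you diverge is in the second stage. The paper appeals to its stated remark that any nondecreasing function into $[\alpha,\beta]$ is an average of nondecreasing $\{\alpha,\beta\}$-valued step functions, applies it on each of the three $\varphi$-ranges, and concludes directly that only four-valued step functions $\varphi$ need be considered; you instead integrate the slope, apply Fubini, rewrite the objective in terms of the generalized inverse $T(s)=\inf\{t:\varphi(t)>s\}$, and obtain $a\le b\le c$ by picking approximate maximizers of $F_i\circ T$ over the three $s$-intervals, the ordering then coming for free from the monotonicity of $T$. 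The two are dual presentations of the same extreme-point reduction, but your layer-cake route makes the constraint $a\le b\le c$ emerge structurally rather than by inspection of the nested $\varphi$-level intervals, and it isolates cleanly why the middle slope is always $K_2$ (which, as you note, is where $\lambda_1\le\lambda_2$ enters). Both treatments silently assume the relevant tail integrals $\int_\tau^\infty K_i$ and the Fubini swap are well-defined; this holds for the specific $K_1,K_2$ of (\ref{eq:K1})--(\ref{eq:K2}) but is worth flagging since the proposition is stated for general functions.
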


\bigskip

\noindent \textbf{Remark. }We will use the following well-known result.
Every nondecreasing function $\psi :[u,v]\rightarrow \lbrack 0,1]$ (where $%
-\infty \leq u\leq v\leq \infty )$ can be expressed as an (integral) average
of nondecreasing functions that take only the values $0$ and\footnote{%
Assume first that $\psi (v)=1.$ If $\psi $ is a right-continuous function
then $\psi $ may be viewed as a cumulative distribution function on $[u,v]$,
and we have $\psi (t)=\int_{[u,t]}^{{}}\mathrm{d}\psi (x)=\int_{[u,v]}%
\mathbf{1}_{[x,v]}(t)\mathrm{d}\psi (x)$ for every $t\in \lbrack u,v]$
(where $\mathbf{1}_{E}$ is the indicator function of the set $E,$ i.e., $%
\mathbf{1}_{E}(t)=1$ if $t\in E$ and $\mathbf{1}_{E}(t)=0$ otherwise). If $%
\psi $ is not necessarily right-continuous, let $\psi
_{+}(t):=\lim_{t^{\prime }\searrow t}\psi (t^{\prime })$ (which is
right-contiuous), $\psi _{-}(t):=\lim_{t^{\prime }\nearrow t}\psi (t^{\prime
}),$ and take $\lambda _{t}\in \lbrack 0,1]$ such that $\psi (t)=\lambda
_{t}\psi _{+}(t)+(1-\lambda _{t})\psi _{-}(t);$ then $\psi
=\int_{[u,v]}(\lambda _{x}\mathbf{1}_{[x,v]}+\allowbreak (1-\lambda _{x})%
\mathbf{1}_{(x,v]})d\psi _{+}(x).$%
\par
If $0<\psi (v)<1$ then $\psi =\psi (v)\tilde{\psi}+(1-\psi (v))\mathbf{0,}$
where $\tilde{\psi}(t):=\psi (t)/\psi (v)$ and $\mathbf{0}$ is the zero
function (i.e., $\mathbf{0}(t)=0$ for all $t),$ and we apply the above to $%
\tilde{\psi}.$ Finally, if $\psi (v)=0$ then $\psi =\mathbf{0.}$} $1.$ More
generally, every nondecreasing function $\psi :[u,v]\rightarrow \lbrack
\alpha ,\beta ]$ (where $\alpha \leq \beta $ are finite) can be expressed as
an average of nondecreasing functions that take only the two values $\alpha $
and $\beta $ (when $\alpha <\beta ,$ apply the above to $(\psi -\alpha
)/(\beta -\alpha ),$ which takes values in $[0,1]).$ Therefore, when we
maximize a linear functional $\int_{u}^{v}\psi (t)K(t)\mathrm{d}t$ over all
nondecreasing functions $\psi :[u,v]\rightarrow \lbrack \alpha ,\beta ],$ it
suffices to consider those functions that take only the two extreme values $%
\alpha $ and $\beta .$

\bigskip

\begin{proof}
We have seen above that $I\geq I(a,b,c)$ for every $a,b,c.$ We now show that
the supremum in $I$ cannot be higher.

For each $t,$ given $\varphi (t)=\varphi _{1}(t)+\varphi _{2}(t),$ the
expression $\varphi _{1}(t)K_{1}(t)+\varphi _{2}(t)K_{2}(t)$ is maximized by
putting as much weight as possible---subject to the constraints $0\leq
\varphi _{i}(t)\leq \lambda _{i}$---on the higher of $K_{1}(t)$ and $%
K_{2}(t).$ This gives the following upper bounds on $\varphi
_{1}(t)K_{1}(t)+\varphi _{2}(t)K_{2}(t)$:

\begin{itemize}
\item $\varphi (t)\max \{K_{1}(t),K_{2}(t)\}$ for every $t$ in the interval
where $0\leq \varphi (t)\leq \lambda _{1};$

\item $(\varphi (t)-\lambda _{1})K_{2}(t)+\lambda _{1}\max
\{K_{1}(t),K_{2}(t)\}$ for every $t$ in the interval where $\lambda _{1}\leq
\varphi (t)\leq \lambda _{2}$ (because $\varphi _{1}(t)\leq \lambda _{1}$
implies $\varphi _{2}(t)\geq \varphi (t)-\lambda _{1});$ and

\item $(\varphi (t)-\lambda _{2})K_{1}(t)+(\varphi (t)-\lambda
_{1})K_{2}(t)+ $ $(\lambda _{1}+\lambda _{2}-\varphi (t))\max
\{K_{1}(t),K_{2}(t)\}$ for every $t$ in the interval where $\lambda _{2}\leq
\varphi (t)\leq \lambda _{1}+\lambda _{2}.$
\end{itemize}

In each one of these three intervals the bound is affine in $\varphi $ and
so, by the remark above, when maximizing over nondecreasing $\varphi ,$ it
suffices to consider solely those functions $\varphi $ that take only the
corresponding two extreme values. Altogether, such a $\varphi $ takes only
the values $0,\lambda _{1},\lambda _{2},$ and $\lambda _{1}+\lambda _{2}$,
say on the intervals $(0,a),(a,b),(b,c),$ and $(c,\infty ),$
respectively---and then $\int (\varphi _{1}K_{1}+\varphi _{2}K_{2})$ becomes
precisely $I(a,b,c)$. Thus indeed $I\leq \sup I(a,b,c).$
\end{proof}

\bigskip

We now come to the main argument of our proof, which yields, using
Proposition \ref{p:I-lambda}, an upper bound on the $K_{i}$-term for our
specific functions $K_{i}$.

\begin{proposition}
\label{p:bound-I}Let $0<\lambda _{1}\leq \lambda _{2}\leq 1,$ and let $%
K_{1},K_{2}$ be given by (\ref{eq:K1}) and (\ref{eq:K2}). Then%
\begin{equation*}
I\leq \frac{1}{e}\left( \lambda _{2}r_{1}+\lambda _{1}r_{2}+\lambda
_{1}(e-1)\min \{r_{1},r_{2}\}\right) .
\end{equation*}
\end{proposition}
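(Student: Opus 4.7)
The plan is to invoke Proposition \ref{p:I-lambda} so that it suffices to bound $I(a,b,c)$ uniformly over $0 \le a \le b \le c \le \infty$. The key observation is an antiderivative identity: setting $\Phi_i(t) := (H_i(t) - r_i)\,G_{3-i}(t)$, a direct computation using $H_i' = G_i$ and $G_{3-i}' = -f_{3-i}$ gives $\Phi_i'(t) = -K_i(t)$. Since $\Phi_i(0) = -r_i$ and $\Phi_i(\infty) = 0$ (as $H_i(t) = O(\log t)$ while $G_{3-i}(t) = O(1/t)$), we have $\int_\alpha^\infty K_i(t)\,dt = \Phi_i(\alpha)$. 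Substituting into (\ref{eq:Iabc}) yields the compact form
$$I(a,b,c) = \lambda_1 \int_a^c \max\{K_1,K_2\}\,dt + \lambda_1\bigl(\Phi_1(c) + \Phi_2(c)\bigr) + (\lambda_2 - \lambda_1)\,\Phi_2(b).$$

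I would next bound $\Phi_i$ pointwise: from $H_i(t) - r_i \le r_i\log(t/r_i)$ for $t \ge r_i$ (and $\le 0$ otherwise) via (\ref{eq:bound-H-0}), together with $G_j(t) \le \min\{1, r_j/t\}$ and the elementary calculus fact $\log x \le x/e$ for all $x > 0$, a short case analysis on the relative sizes of $t, r_1, r_2$ gives $\sup_{t \ge 0}\Phi_i(t) \le r_{3-i}/e$ for $i=1,2$. In particular, $(\lambda_2 - \lambda_1)\,\Phi_2(b) \le (\lambda_2 - \lambda_1)\,r_1/e$.

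The heart of the argument is the combined inequality
$$\sup_{0 \le a \le c}\Bigl[\int_a^c \max\{K_1,K_2\}\,dt + \Phi_1(c) + \Phi_2(c)\Bigr] \;\le\; \min\{r_1,r_2\} + \frac{\max\{r_1,r_2\}}{e}. \qquad (\star)$$
Granted $(\star)$, adding the earlier bound on $\Phi_2(b)$ gives $I(a,b,c) \le \lambda_1 \min\{r_1,r_2\} + \lambda_1 \max\{r_1,r_2\}/e + (\lambda_2-\lambda_1)\,r_1/e$, which, by separate arithmetic checks in the cases $r_1 \le r_2$ and $r_1 > r_2$, equals the target $\frac{1}{e}(\lambda_2 r_1 + \lambda_1 r_2 + \lambda_1(e-1)\min\{r_1,r_2\})$. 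To prove $(\star)$, I would use $\max + \min = K_1 + K_2$ together with the antiderivative identity to recast the bracketed expression as $\Phi_1(a) + \Phi_2(a) - \int_a^c \min\{K_1, K_2\}\,dt$. On the range $[0, \min\{r_1, r_2\}]$ we have $H_i(t) \le t \le r_i$ for both $i$, so $K_i(t) \le -G_1(t)G_2(t)$ and hence $-\min\{K_1, K_2\}(t) \ge G_1(t) G_2(t) \ge 0$, which supplies the $\min\{r_1, r_2\}$ term; the choice $c$ near $e\cdot\max\{r_1, r_2\}$ then optimizes the $\log$-factor against the decay of $G_j$ via $\log x \le x/e$, producing the $\max\{r_1, r_2\}/e$ term.

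The main obstacle will be proving $(\star)$ with the sharp constant $1/e$: it requires a careful case analysis (splitting by which of $r_1, r_2$ is larger, with further sub-cases depending on whether ratios like $r_1/r_2$ exceed $e$) to identify the extremal $a$ and $c$, and a delicate accounting of the small-$t$ and large-$t$ contributions so that the inequality $\log x \le x/e$ binds at exactly the right point.
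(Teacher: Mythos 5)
Your setup matches the paper's: you invoke Proposition \ref{p:I-lambda}, introduce the antiderivative $\Phi_i$ (the paper's $L_i(t)=G_j(t)(H_i(t)-r_i)$), obtain the pointwise bound $\Phi_i\le r_{3-i}/e$, rewrite $I(a,b,c)$ in terms of $\Phi_i$, peel off the $(\lambda_2-\lambda_1)\Phi_2(b)$ term, and isolate the key inequality $(\star)$. Your $(\star)$ is exactly the paper's bound $J(a,c)\le r_\ell/e + r_k$, and your arithmetic showing that $(\star)$ plus the $\Phi_2(b)$ bound yields the stated inequality is correct.

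However, the proof of $(\star)$---which is the substantive core of the proposition---is not actually carried out, and the sketch you give for it is headed in the wrong direction. The paper introduces the auxiliary functions $M_i(t):=f_j(t)(H_i(t)-r_i)=K_i(t)+G_1(t)G_2(t)$, whose sign-change points $\tau_i$ (defined by $H_i(\tau_i)=r_i$, with $\tau_i\ge r_i$) organize a three-case analysis on the position of $a$. Crucially, since $\max\{K_1,K_2\}$ and $K_1+K_2$ are both nonpositive on $[0,\tau_1]$ (where WLOG $\tau_1\le\tau_2$), the supremum defining $(\star)$ is attained with $a\ge\tau_1$; the paper then only ever needs the estimates for $a\ge\tau_1\ge r_1$. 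Your sketch instead focuses on small $a$ and the range $[0,\min\{r_1,r_2\}]$, observing that $-\min\{K_1,K_2\}\ge G_1G_2\ge 0$ there. That inequality is correct, but it makes the positive term $-\int_a^c\min\{K_1,K_2\}$ in your recast \emph{larger}, which is adverse for an upper bound; you would have to exhibit a precise cancellation against $\Phi_1(a)+\Phi_2(a)$ (which is as negative as $-r_1-r_2$ at $a=0$), and you do not. Your proposed case split (by the size of $r_1/r_2$, and picking $c\approx e\max\{r_1,r_2\}$) does not appear to lead to the clean bound; the cases that actually work are by the location of $a$ relative to $\tau_1,\tau_2$, using $L_i(u)\le r_j/e$ and $L_i(u)+r_ir_j/u\le r_j$ for $u\ge r_i$. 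As written, $(\star)$ remains an unproven claim, so the proposal has a genuine gap at precisely the step where the difficulty lies.
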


\begin{proof}
Recalling (\ref{eq:H}), we have the following: for each\textbf{\ }$i=1,2,$
the function $H_{i}(t)$ is continuous and strictly increasing at each $t$\
in the support of $X_{i}$ (because $G_{i}(t)>0$ there), and $H_{i}(\infty )=%
\mathbb{E}\left[ X_{i}\right] \geq r_{i}$ (with strict inequality unless $%
X_{i}$ is constant, in which case everything trivializes). Therefore there
exists a finite $\tau _{i}$ such that $H_{i}(\tau _{i})=r_{i};$ since for
all $t<r_{i}$ we have $H_{i}(t)\leq t<r_{i}$ (because $G_{i}\leq 1),$ it
follows that 
\begin{equation*}
\tau _{i}\geq r_{i}.
\end{equation*}

Put $L_{i}(t):=G_{j}(t)(H_{i}(t)-r_{i});$ taking derivatives gives 
\begin{equation*}
L_{i}^{\prime }(t)=-f_{j}(t)(H_{i}(t)-r_{i})+G_{j}(t)G_{i}(t)=-K_{i}(t).
\end{equation*}

We will use the following estimates:%
\begin{equation}
\int_{u}^{\infty }G_{1}(t)G_{2}(t)\mathrm{d}t\leq \int_{u}^{\infty }\frac{%
r_{1}}{t}\frac{r_{2}}{t}\mathrm{d}t=\frac{r_{1}r_{2}}{u}  \label{eq:int-G1G2}
\end{equation}%
for every $u>0$ (because $G_{i}(t)\leq r_{i}/t);$%
\begin{equation*}
H_{i}(u)-r_{i}\leq r_{i}\log \frac{u}{r_{i}}
\end{equation*}%
for every $u\geq r_{i}$ (recall (\ref{eq:bound-H-0})); and, thus, 
\begin{equation}
L_{i}(u)=G_{j}(u)(H_{i}(u)-r_{i})\leq \frac{r_{i}r_{j}}{u}\log \frac{u}{r_{i}%
}  \label{eq:bound-L}
\end{equation}%
for every $u\geq r_{i}.$ The last inequality implies that $%
L_{i}(u)\rightarrow 0$ as $u\rightarrow \infty ,$ and so 
\begin{equation}
\int_{u}^{\infty }K_{i}(t)\mathrm{d}t=\left[ -L_{i}(t)\right] _{u}^{\infty
}=L_{i}(u).  \label{eq:int-K}
\end{equation}%
Finally, letting $\{i,j\}=\{1,2\},$\ we have%
\begin{eqnarray}
L_{i}(u) &\leq &\frac{1}{e}r_{j}\text{\ \ \ and}  \label{eq:L<1/e} \\
L_{i}(u)+\frac{r_{i}r_{j}}{u} &\leq &r_{j}  \label{eq:L+rr/t}
\end{eqnarray}%
for every $u\geq r_{i}$ (use (\ref{eq:bound-L}) together with $\log x/x\leq
1/e$ and $(\log x+1)/x\leq 1$ for all $x>0;$ note that there is no typo
here: these bounds on $L_{i}$ use $r_{j}$ rather than $r_{i}).$

We need to bound $I(a,b,c).$ For the last term of (\ref{eq:Iabc}) we have,
by (\ref{eq:int-K}) and (\ref{eq:L<1/e}), 
\begin{equation}
\int_{b}^{\infty }K_{2}(t)\mathrm{d}t\leq \frac{1}{e}r_{1},
\label{eq:c-term}
\end{equation}%
and so it remains to estimate $J(a,c):=\int_{a}^{c}\max
\{K_{1},K_{2}\}+\int_{c}^{\infty }(K_{1}+K_{2}).$ A main difficulty in doing
so is that the $K_{i}$ are neither nonnegative nor monotonic, and may change
signs many times. To handle this we define for each $i$ an auxiliary
function $M_{i}(t):=f_{j}(t)(H_{i}(t)-r_{i})=K_{i}(t)+G_{1}(t)G_{2}(t),$
which vanishes at $t=\tau _{i}$, is nonpositive before $\tau _{i},$\ and
nonnegative after $\tau _{i}$; i.e., $M_{i}(t)\geq 0$\ for $t\geq \tau _{i}$%
\ and $M_{i}(t)\leq 0$\ for $t\leq \tau _{i}.$\ 

We distinguish three cases according to the location of $a$\ relative to $%
\tau _{1}$ and $\tau _{2}$ (the points where $M_{1}$ and $M_{2}$ change
sign); without loss of generality\footnote{%
The expression $J(a,b)$ that we estimate now is symmetric in $i=1,2,$ and so
the assumption that $\lambda _{1}\leq \lambda _{2}$ is irrelevant here; we
thus assume that $\tau _{1}\leq \tau _{2}.$} assume that $\tau _{1}\leq \tau
_{2}.$

\noindent $\bullet $ \textbf{Case 1}\emph{. }$a\geq \max \{\tau _{1},\tau
_{2}\}=\tau _{2}.$ For every $t\geq a$ we have $M_{i}(t)\geq 0$ (because $%
t\geq a\geq \tau _{i}),$ and thus\footnote{%
This is the inequality $\max \{x,y\}\leq x+y+z$\ whenever\textbf{\ }$x,y\geq
-z.$}%
\begin{eqnarray*}
\max \{K_{1}(t),K_{2}(t)\} &=&\max \{M_{1}(t),M_{2}(t)\}-G_{1}(t)G_{2}(t) \\
&\leq &M_{1}(t)+M_{2}(t)-G_{1}(t)G_{2}(t) \\
&=&K_{1}(t)+K_{2}(t)+G_{1}(t)G_{2}(t).
\end{eqnarray*}%
Since we clearly also have $K_{1}+K_{2}\leq K_{1}+K_{2}+G_{1}G_{2},$ we get%
\begin{eqnarray*}
J(a,c) &\leq &\int_{a}^{\infty }(K_{1}(t)+K_{2}(t)+G_{1}(t)G_{2}(t))\mathrm{d%
}t \\
&\leq &L_{1}(a)+L_{2}(a)+\frac{r_{1}r_{2}}{a}=:\bar{J}(a)
\end{eqnarray*}%
by (\ref{eq:int-K}) and (\ref{eq:int-G1G2}). If, say, $r_{k}\leq r_{\ell }$
(where $\{k,\ell \}=\{1,2\})$ then using (\ref{eq:L<1/e}) for $i=k$ and (\ref%
{eq:L+rr/t}) for $i=\ell $ (recall that $a\geq \tau _{i}\geq r_{i}$\ for
both $i)$ yields\footnote{%
A slightly better estimate of $(2/\sqrt{e})\sqrt{r_{1}r_{2}}$ may be
obtained here by directly maximizing $\bar{J}(a)$ over $a;$ however, this
will not improve the overall estimate, due to Cases 2 and 3.}%
\begin{equation*}
J(a,c)\leq \bar{J}(a)\leq \frac{1}{e}r_{\ell }+r_{k}.
\end{equation*}

\noindent $\bullet $ \textbf{Case 2}\emph{. }$\tau _{1}\leq a<\tau _{2}.$ In
the range $t\in \lbrack a,\tau _{2})\subseteq \lbrack \tau _{1},\tau _{2})$
we have $M_{1}(t)\geq 0\geq M_{2}(t),$ and so $K_{1}(t)\geq K_{2}(t)$ and $%
K_{2}(t)\leq 0;$ therefore both $\max \{K_{1}(t),K_{2}(t)\}$ and $%
K_{1}(t)+K_{2}(t)$ are $\leq K_{1}(t),$ and thus, regardless of where $c$ is,%
\begin{eqnarray*}
J(a,c) &\leq &\int_{a}^{\tau _{2}}K_{1}(t)\mathrm{d}t+\bar{J}(\tau
_{2})=\left( L_{1}(a)-L_{1}(\tau _{2})\right) +\left( L_{1}(\tau
_{2})+L_{2}(\tau _{2})+\frac{r_{1}r_{2}}{\tau _{2}}\right) \\
&=&L_{1}(a)+\frac{r_{1}r_{2}}{\tau _{2}}\leq \frac{1}{e}r_{2}+\min
\{r_{1},r_{2}\}\leq \frac{1}{e}r_{\ell }+r_{k},
\end{eqnarray*}%
where we have used: $L_{2}(\tau _{2})=0$ and $\tau _{2}=\max \{\tau
_{1},\tau _{2}\}\geq \max \{r_{1},r_{2}\}$ (because $\tau _{i}\geq r_{i}$).

\noindent $\bullet $ \textbf{Case 3}\emph{. }$a<\min \{\tau _{1},\tau
_{2}\}=\tau _{1}.$ For every $t\leq \tau _{1}$ we have $K_{1}(t)\leq
M_{1}(t)\leq 0$ and $K_{2}(t)\leq M_{2}(t)\leq 0,$ and so both $\max
\{K_{1}(t),K_{2}(t)\}$ and $K_{1}(t)+K_{2}(t)$ are $\leq 0$ in the interval $%
[a,\tau _{1}].$ Therefore $J(a,c)\leq J(\tau _{1},\max \{c,\tau _{1}\}),$ to
which we apply Case 2 (with $a=\tau _{1})$.

Thus in all three cases the bound on $J(a,c)$ is $(1/e)r_{\ell
}+r_{k}=(1/e)(r_{1}+r_{2})+(1-1/e)\min \{r_{1},r_{2}\};$ together with (\ref%
{eq:c-term}), we get%
\begin{equation*}
I(a,b,c)\leq \frac{\lambda _{1}}{e}r_{1}+\frac{\lambda _{1}}{e}r_{2}+\frac{%
\lambda _{1}(e-1)}{e}\min \{r_{1},r_{2}\}+\frac{\lambda _{2}-\lambda _{1}}{e}%
r_{1},
\end{equation*}%
completing the proof.
\end{proof}

\bigskip

\noindent \textbf{Remark. }If $\varphi _{1}$ and $\varphi _{2}$ are \emph{%
each} required to be nondecreasing (rather than just their sum), then we get
a smaller bound on $\int \left( \varphi _{1}K_{1}+\varphi _{2}K_{2}\right) $%
, namely:%
\begin{eqnarray*}
&&\sup_{\varphi _{1},\varphi _{2}}\int_{0}^{\infty }\left( \varphi
_{1}(t)K_{1}(t)+\varphi _{2}(t)K_{2}(t)\right) \mathrm{d}t \\
&=&\sup_{0\leq a\leq \infty }\lambda _{1}\int_{a}^{\infty
}K_{1}(t)dt+\sup_{0\leq b\leq \infty }\lambda _{2}\int_{b}^{\infty
}K_{2}(t)dt\leq \frac{\lambda _{1}}{e}r_{2}+\frac{\lambda _{2}}{e}r_{1}
\end{eqnarray*}%
(use the remark preceding the proof of Proposition \ref{p:I-lambda} together
with (\ref{eq:int-K}) and (\ref{eq:L<1/e})). Therefore, for mechanisms $\mu
=(q,s)$ where $q_{1}(t,t)$ and $q_{2}(t,t)$ are monotonic---such as, for
instance, symmetric mechanisms, where\footnote{%
This proves Theorem B of Hart and Nisan (2017) for two independent \emph{and
identically distributed }goods.} $q_{1}(t,t)=q_{2}(t,t)$---we get, taking $%
\lambda _{1}=\lambda _{2}=1$ in Proposition \ref{p:revenue},%
\begin{equation*}
R(\mu ;X)\leq r_{1}+r_{2}+\frac{1}{e}r_{1}+\frac{1}{e}r_{2}=\left( 1+\frac{1%
}{e}\right) (r_{1}+r_{2}).
\end{equation*}%
This yields the bound $e/(e+1),$ which is better than $\sqrt{e}/(\sqrt{e}%
+1). $

\section{Completing the Proof\label{s:proof of thm A}}

Combining the results of the previous two sections yields the first part of
our Main Result:

\begin{theorem}
\label{th:62.2}Let $X=(X_{1},X_{2})$ be a two-good random valuation with
independent goods. Then%
\begin{equation*}
\frac{\text{\textsc{SRev}}(X_{1},X_{2})}{\text{\textsc{Rev}}(X_{1},X_{2})}%
\geq \frac{\sqrt{e}}{\sqrt{e}+1}\approx 0.62.
\end{equation*}
\end{theorem}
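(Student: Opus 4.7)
The plan is to combine the three tools just developed---the rescaling Lemma \ref{l:rescale-rev}, the revenue decomposition Proposition \ref{p:revenue}, and the $K_i$-term estimate Proposition \ref{p:bound-I}---with a choice of scaling parameters that balances the resulting bound. The correct choice turns out to be $\lambda_2/\lambda_1 = \sqrt{e}$; the normalization $\lambda_1 = 1/\sqrt{e}$, $\lambda_2 = 1$ is convenient because it keeps both within $(0,1]$, which is what Lemma \ref{l:rescale-rev} and Proposition \ref{p:bound-I} require.

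Since the ratio $\text{\textsc{SRev}}/\text{\textsc{Rev}}$ is symmetric in $(X_1, X_2)$, I would relabel so that $\rho_1 := \text{\textsc{Rev}}(X_1) \geq \text{\textsc{Rev}}(X_2) =: \rho_2$, and set $\lambda_1 := 1/\sqrt{e}$, $\lambda_2 := 1$. Put $\tilde{X}_i := X_i/\lambda_i$, so $\tilde{r}_i := \text{\textsc{Rev}}(\tilde{X}_i) = \rho_i/\lambda_i$. By Lemma \ref{l:rescale-rev}, together with Proposition \ref{p:HN-p-5-6}(ii) (NPT is preserved by the rescaling bijection used in the proof of that lemma, since $\hat{s}(x) = s(x/\lambda)$), it suffices to bound $R(\mu; \tilde{X})$ for an arbitrary IC, IR, NPT mechanism $\mu \in \mathcal{M}_{\lambda_1, \lambda_2}$. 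Proposition \ref{p:revenue} applied to $\tilde{X}$ then gives
\[
R(\mu; \tilde{X}) \;\leq\; \lambda_1 \tilde{r}_1 + \lambda_2 \tilde{r}_2 + I \;=\; \rho_1 + \rho_2 + I \;=\; \text{\textsc{SRev}}(X_1,X_2) + I,
\]
where $I$ is the $K_i$-integral associated with the rescaled valuation.

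Next I would invoke Proposition \ref{p:bound-I}. Because $\rho_1 \geq \rho_2$ and $\lambda_1 \leq \lambda_2$ force $\tilde{r}_1 \geq \tilde{r}_2$, the minimum in that bound equals $\tilde{r}_2$, and a short calculation collapses it to
\[
I \;\leq\; \frac{\lambda_2}{e\lambda_1}\,\rho_1 + \frac{\lambda_1}{\lambda_2}\,\rho_2.
\]
Plugging in $\lambda_1 = 1/\sqrt{e}$, $\lambda_2 = 1$, both coefficients become $1/\sqrt{e}$, so $I \leq (\rho_1 + \rho_2)/\sqrt{e} = \text{\textsc{SRev}}(X_1,X_2)/\sqrt{e}$. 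Combining with the previous display and taking the supremum over $\mu$ yields $\text{\textsc{Rev}}(X_1, X_2) \leq \text{\textsc{SRev}}(X_1, X_2)(1 + 1/\sqrt{e})$, which rearranges to the target bound $\sqrt{e}/(\sqrt{e}+1)$.

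I do not expect a serious obstacle at this stage, since the difficult analytic work lives inside Proposition \ref{p:bound-I}. What requires care is the coordination of the three inputs: verifying that the rescaling in Lemma \ref{l:rescale-rev} preserves NPT so that Proposition \ref{p:revenue} applies; checking that the WLOG ordering $\rho_1 \geq \rho_2$ combined with $\lambda_1 \leq \lambda_2$ always lands in the $\min\{\tilde{r}_1, \tilde{r}_2\} = \tilde{r}_2$ branch of Proposition \ref{p:bound-I} (otherwise the extra $\lambda_1(e-1)\tilde{r}_1$ contribution would inflate the coefficient of $\rho_1$ and spoil the estimate); and noting that $\lambda_2/\lambda_1 = \sqrt{e}$ is the unique ratio that equalizes the two coefficients $\lambda_2/(e\lambda_1)$ and $\lambda_1/\lambda_2$, which is precisely why $\lambda = 1/\sqrt{e}$ is the best possible nonsymmetric split.
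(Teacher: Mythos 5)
Your proof is correct and follows essentially the same route as the paper's: rescale via Lemma \ref{l:rescale-rev}, decompose via Proposition \ref{p:revenue}, bound the $K_i$-term via Proposition \ref{p:bound-I}, and balance with $\lambda_2/\lambda_1=\sqrt{e}$. The only cosmetic difference is that you relabel so $\text{\textsc{Rev}}(X_1)\geq\text{\textsc{Rev}}(X_2)$ to pin down the $\min$ term exactly, whereas the paper simply bounds $\min\{R_1/\lambda_1, R_2/\lambda_2\}\leq R_2/\lambda_2$ without reordering.
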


\begin{proof}
Let\footnote{%
The results of the previous sections will be applied to the rescaled $\tilde{%
X}_{i}=X_{i}/\lambda _{i},$ and so we will use $r_{i}$ for the revenue of $%
\tilde{X}_{i},$ and $R_{i}$ for the revenue of the original $X_{i}.$} $%
R_{i}:=$\textsc{Rev}$(X_{i});$ thus \textsc{SRev}$(X_{1},X_{2})=R_{1}+R_{2}.$
Given $0<\lambda _{1}\leq \lambda _{2},$ put $\tilde{X}_{i}:=X_{i}/\lambda
_{i}$ and $r_{i}:=$\textsc{Rev}$(\tilde{X}_{i})=R_{i}/\lambda _{i}.$ Using
Lemma \ref{l:rescale-rev}, Proposition \ref{p:revenue} for $(\tilde{X}_{1},%
\tilde{X}_{2}),$ and then Proposition \ref{p:bound-I}, yields%
\begin{eqnarray}
\text{\textsc{Rev}}(X_{1},X_{2}) &\leq &\lambda _{1}\frac{R_{1}}{\lambda _{1}%
}+\lambda _{2}\frac{R_{2}}{\lambda _{2}}+\frac{\lambda _{2}}{e}\frac{R_{1}}{%
\lambda _{1}}+\frac{\lambda _{1}}{e}\frac{R_{2}}{\lambda _{2}}+\frac{\lambda
_{1}(e-1)}{e}\min \left\{ \frac{R_{1}}{\lambda _{1}},\frac{R_{2}}{\lambda
_{2}}\right\}  \notag \\
&\leq &R_{1}+R_{2}+\frac{1}{\lambda e}R_{1}+\frac{\lambda }{e}R_{2}+\frac{%
\lambda (e-1)}{e}R_{2}  \label{eq:R1R2} \\
&=&R_{1}+R_{2}+\frac{1}{\lambda e}R_{1}+\lambda R_{2},  \notag
\end{eqnarray}%
where in the second line we put $\lambda :=\lambda _{1}/\lambda _{2}\in
(0,1] $ and used $\min \left\{ R_{1}/\lambda _{1},R_{2}/\lambda _{2}\right\}
\leq R_{2}/\lambda _{2}.$ The final expression equals $(1+1/\sqrt{e}%
)(R_{1}+R_{2}) $ when $\lambda =1/\sqrt{e},$ completing the proof.\footnote{%
One may check that $1+1/\sqrt{e}$ is the best bound that is independent of $%
R_{1}$ and $R_{2}$ (when $R_{1}=R_{2}$ the above expression is minimized $%
only$ at $\lambda =1/\sqrt{e}).$}
\end{proof}

\section{Regular Goods and Monotonic Mechanisms\label{s:73}}

In this section we prove the second part of our Main Result, namely, the
better bound of $73\%$ for regular goods (and also for monotonic
mechanisms). We will use here only the symmetric diagonal decomposition
(i.e., $\lambda _{1}=\lambda _{2}=1).$

Following Myerson (1981), we say that a one-dimensional random variable $X$
is $\emph{weakly}$ \emph{regular} if its support is an interval $[\alpha
,\beta ]$ with\footnote{%
Notice that we allow $\beta =\infty ,$ in which case the interval is
understood to be $[\alpha ,\infty ).$} $0\leq \alpha <\beta \leq \infty ,$
on which it has a density function $f(t)$ that is positive and continuous,
and the resulting \textquotedblleft virtual valuation function" $%
t-G(t)/f(t)\ $is nondecreasing (Myerson's regularity condition requires the
virtual valuation to be \emph{strictly} increasing).

\begin{lemma}
\label{l:regular}Assume that $X_{1}$ and $X_{2}$ are weakly regular. Then $%
K_{i}(u)>0$ implies that $K_{i}(v)\geq 0$ for all $v>u,$ for $i=1,2.$
\end{lemma}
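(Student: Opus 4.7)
My plan is to reduce the lemma to a monotonicity comparison between two explicit auxiliary ratios. Fix $i\in\{1,2\}$ and let $j$ be the other index. On the region where $f_j(t),G_i(t)>0$, the sign of
\[
K_i(t)=f_j(t)(H_i(t)-r_i)-G_1(t)G_2(t)
\]
agrees with the sign of
\[
D_i(t)\;:=\;\Phi_i(t)-\Psi_i(t),\quad \Phi_i(t):=\frac{H_i(t)-r_i}{G_i(t)},\quad \Psi_i(t):=\frac{G_j(t)}{f_j(t)}.
\]
So it suffices to show that $D_i$ is nondecreasing on $[u,\infty)$ once $D_i(u)>0$. The strategy is to produce the matched bounds $\Phi_i'\geq 1$ (wherever $H_i\geq r_i$) and $\Psi_i(v)-\Psi_i(u)\leq v-u$ (for all $u\leq v$), which together force $D_i$ to be nondecreasing.

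The lower bound on $\Phi_i$ is a direct computation using $H_i'=G_i$ and $G_i'=-f_i$:
\[
\Phi_i'(t)=1+\frac{f_i(t)(H_i(t)-r_i)}{G_i(t)^2},
\]
which is $\geq 1$ whenever $H_i(t)\geq r_i$. The upper bound on $\Psi_i$ is where weak regularity enters: the virtual valuation $\psi_j(t)=t-G_j(t)/f_j(t)=t-\Psi_i(t)$ is nondecreasing by hypothesis, so $\Psi_i(v)-\Psi_i(u)=(v-u)-(\psi_j(v)-\psi_j(u))\leq v-u$ for every $v\geq u$, with no smoothness of $\psi_j$ needed.

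Now suppose $K_i(u)>0$. Since $\Psi_i(u)\geq 0$ automatically, we have $\Phi_i(u)>0$, hence $H_i(u)>r_i$; because $H_i$ is nondecreasing, $H_i(t)\geq r_i$ throughout $[u,\infty)$, so $\Phi_i'\geq 1$ holds on that half-line. Integrating $\Phi_i'$ and combining with the $\Psi_i$ bound gives $D_i(v)\geq D_i(u)>0$ for every $v\geq u$ at which both ratios are defined, and therefore $K_i(v)\geq 0$ there. The boundary cases are a short mop-up: $K_i(u)>0$ forces $u$ to lie in the support of $X_j$, so no $v\geq u$ sits below that support; for $v\geq \beta_i$ one has $G_i(v)=0$, giving $K_i(v)=f_j(v)(\mathbb{E}[X_i]-r_i)\geq 0$; and for $v>\beta_j$ both $f_j(v)$ and $G_j(v)$ vanish, so $K_i(v)=0$.

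The only real obstacle is spotting the correct pair of ratios $\Phi_i,\Psi_i$. Once they are in place, the inequality $\Phi_i'\geq 1$ is a one-line calculation, and the companion inequality on $\Psi_i$ is essentially the definition of weak regularity; the rest of the argument is bookkeeping about where $f_j$ or $G_i$ vanish.
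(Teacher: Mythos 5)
Your proof is correct and is essentially the paper's argument: the paper introduces the same ratio, writing $\kappa(t)=K_1(t)/(f_2(t)G_1(t))=\bigl(\tfrac{H_1(t)-r_1}{G_1(t)}-t\bigr)+\bigl(t-\tfrac{G_2(t)}{f_2(t)}\bigr)$ and noting that the first summand has derivative $f_1(H_1-r_1)/G_1^2>0$ while the second is nondecreasing by regularity, which is algebraically identical to your pair of bounds $\Phi_i'\geq 1$ and $\Psi_i(v)-\Psi_i(u)\leq v-u$. The paper phrases it as a proof by contradiction (assuming $K_1(u)>0$ and $K_1(v)<0$ to locate both points in the interior interval), which bypasses the boundary mop-up you do explicitly; otherwise the two proofs coincide.
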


\begin{proof}
Let $[\alpha _{i},\beta _{i}]$ be the support of $X_{i}.$ Assume by way of
contradiction that, say, $K_{1}(u)>0$ and $K_{1}(v)<0$ for some $v>u.$
First, $K_{1}(u)>0$ implies that $f_{2}(u)>0$ and $H_{1}(u)-r_{1}>0$
(otherwise $K_{1}(u)\leq -G_{1}(u)G_{2}(u)\leq 0),$ and so $\alpha _{2}\leq
u\leq \beta _{2}$ and $u>\alpha _{1}$ (because $H_{1}$ is nondecreasing and $%
H_{1}(\alpha _{1})\leq \alpha _{1}=\alpha _{1}\cdot G_{1}(\alpha _{1})\leq
r_{1}).$ Second, $K_{1}(v)<0$ implies that $G_{1}(v)>0$ and $G_{2}(v)>0$
(otherwise $K_{1}(v)=f_{2}(v)(H_{1}(v)-r_{1})\geq
f_{2}(v)(H_{1}(u)-r_{1})\geq 0$ since $H_{1}$ is nondecreasing), and so $%
v<\beta _{1}$ and $v<\beta _{2}$ (because $G_{i}(\beta _{i})=0).$ Together
with $u<v$ it follows that $u$ and $v$ both lie in the interval\footnote{%
I.e., $t_{1}<u<v<\beta _{1},$ where $t_{1}>\alpha _{1}$ is the point where $%
H_{1}(t_{1})=r_{1},$ and $\alpha _{2}\leq u<v<\beta _{2}.$} where $%
f_{2}(t)>0,$ $G_{1}(t)>0,$ and $H_{1}(t)-r_{1}>0.$ But in that interval the
function $\kappa ,$ defined by%
\begin{equation*}
\kappa (t):=\frac{K_{1}(t)}{f_{2}(t)G_{1}(t)}=\left( \frac{H_{1}(t)-r_{1}}{%
G_{1}(t)}-t\right) +\left( t-\frac{G_{2}(t)}{f_{2}(t)}\right) ,
\end{equation*}%
is increasing---the derivative of the first term is $%
f_{1}(t)(H_{1}(t)-r_{1})/G_{1}^{2}(t)>0,$ and the second term is
nondecreasing by regularity. Therefore we cannot have $\kappa (u)>0$\ and $%
\kappa (v)<0,$\ which contradicts the assumption that $K_{1}(u)>0$\ and $%
K_{1}(v)<0.$
\end{proof}

\bigskip

This yields the second part of our Main Result:

\begin{theorem}
\label{th:regular}Let $X=(X_{1},X_{2})$ be a two-good random valuation with
independent and weakly regular\emph{\ }goods. Then%
\begin{equation*}
\frac{\text{\textsc{SRev}}(X_{1},X_{2})}{\text{\textsc{Rev}}(X_{1},X_{2})}%
\geq \frac{e}{e+1}\approx 0.73.
\end{equation*}
\end{theorem}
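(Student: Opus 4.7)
The plan is to specialize Proposition \ref{p:revenue} to the symmetric case $\lambda_1 = \lambda_2 = 1$ and then exploit Lemma \ref{l:regular} to upper bound the $K_i$-term by $(r_1+r_2)/e$, which, once combined with the separate revenues $r_1 + r_2 = \textsc{SRev}(X_1,X_2)$, delivers the desired ratio $e/(e+1)$.

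The key structural consequence of Lemma \ref{l:regular} is that each $K_i$ has at most one sign change, from non-positive to non-negative. Define $t_i^* := \inf\{t : K_i(t) > 0\} \in [0,\infty]$, so $K_i(t) \leq 0$ for $t < t_i^*$ and $K_i(t) \geq 0$ for $t > t_i^*$. Moreover, as noted in the proof of Proposition \ref{p:bound-I}, for $t < r_i$ we have $H_i(t) \leq t < r_i$ and hence
\[
K_i(t) = f_j(t)(H_i(t)-r_i) - G_1(t)G_2(t) < 0,
\]
so $t_i^* \geq r_i$ (with the convention that the inequality holds vacuously if $t_i^* = \infty$).

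With this sign pattern in hand, the joint monotonicity constraint on $\varphi_1+\varphi_2$ from Proposition \ref{p:revenue} becomes slack: for each $i$ independently, the pointwise upper bound
\[
\int_0^\infty \varphi_i(t) K_i(t)\,\mathrm{d}t \;\leq\; \int_{t_i^*}^\infty K_i(t)\,\mathrm{d}t \;=\; L_i(t_i^*)
\]
holds by choosing $\varphi_i = 1$ where $K_i \geq 0$ and $\varphi_i = 0$ where $K_i < 0$ (using only $0 \leq \varphi_i \leq 1$, with the equality being (\ref{eq:int-K})). Since $t_i^* \geq r_i$, the estimate (\ref{eq:L<1/e}) yields $L_i(t_i^*) \leq r_j/e$ for $\{i,j\}=\{1,2\}$ (and trivially $L_i(\infty) = 0$ in the degenerate case). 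Summing over $i$ gives
\[
\int_0^\infty (\varphi_1 K_1 + \varphi_2 K_2)\,\mathrm{d}t \;\leq\; \frac{r_1 + r_2}{e}.
\]

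Plugging this into Proposition \ref{p:revenue} with $\lambda_1 = \lambda_2 = 1$ yields, for every IC, IR, and NPT mechanism $\mu$,
\[
R(\mu; X_1, X_2) \;\leq\; r_1 + r_2 + \frac{r_1 + r_2}{e} \;=\; \frac{e+1}{e}\,\textsc{SRev}(X_1,X_2),
\]
and taking the supremum over $\mu$ via Proposition \ref{p:HN-p-5-6}(ii) gives $\textsc{Rev}(X_1,X_2) \leq \tfrac{e+1}{e}\,\textsc{SRev}(X_1,X_2)$, which rearranges to the claim. There is no real obstacle here: the conceptual work has already been done by Lemma \ref{l:regular}, which reduces the joint optimization over $(\varphi_1,\varphi_2)$ with nondecreasing sum to two independent pointwise optimizations whose solutions happen to be monotone indicators, sidestepping entirely the three-interval analysis of Proposition \ref{p:I-lambda}.
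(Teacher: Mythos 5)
Your proof is correct and follows essentially the same route as the paper: specialize Proposition \ref{p:revenue} to $\lambda_1=\lambda_2=1$, use Lemma \ref{l:regular} to show each $K_i$ changes sign at most once (from nonpositive to nonnegative), bound $\int\varphi_iK_i\leq L_i(t_i^*)\leq r_j/e$ pointwise, and conclude. The only cosmetic difference is that you explicitly define $t_i^*$ and verify $t_i^*\geq r_i$ inline, whereas the paper states the two cases (finite sign-change point vs.\ $K_i\leq 0$ throughout) and relies on the bound $\tau_i\geq r_i$ established earlier in the proof of Proposition \ref{p:bound-I}.
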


\begin{proof}
Take $\lambda _{1}=\lambda _{2}=1$ and let $r_{i}=$\textsc{Rev}$(X_{i}).$
Lemma \ref{l:regular} implies that if $K_{i}(t)$ is positive anywhere then
it is nonnegative from that point on, and so either (i) there is some finite 
$u\geq 0$ such that $K_{i}(t)\leq 0$ for $t<u$ and $K_{i}(t)\geq 0$ for $%
t\geq u$ or (ii) $K_{i}(t)\leq 0$ for all $t\geq 0.$ Therefore, for any
function $\varphi _{i}$ with values in $[0,1],$ we have%
\begin{equation*}
\int_{0}^{\infty }\varphi _{i}(t)K_{i}(t)\mathrm{d}t\leq \int_{u}^{\infty
}K_{i}(t)\mathrm{d}t=L_{i}(u)\leq \frac{1}{e}r_{j}
\end{equation*}%
in case (i) (by (\ref{eq:int-K}) and (\ref{eq:L<1/e})), and $%
\int_{0}^{\infty }\varphi _{i}(t)K_{i}(t)$\textrm{$d$}$t\leq 0$ in case
(ii). Altogether $\int \varphi _{1}K_{1}+\int \varphi _{2}K_{2}\leq
(1/e)r_{2}+(1/e)r_{1},$\textbf{\ }and so Proposition \ref{p:revenue} gives 
\textsc{Rev}$(X)\leq (1+1/e)(r_{1}+r_{2}),$\textbf{\ }proving the result.
\end{proof}

\bigskip

Next, let \textsc{MonRev}$(X)$ denote the maximal revenue that can be
obtained using \emph{monotonic }mechanisms, i.e., mechanisms $\mu =(q,s)$
for which the function $s(x)\ $is nondecreasing in $x.$

\begin{proposition}
Let $X=(X_{1},X_{2})$ be a two-good random valuation with independent and 
\emph{weakly regular }goods. Then%
\begin{equation*}
\frac{\text{\textsc{SRev}}(X_{1},X_{2})}{\text{\textsc{MonRev}}(X_{1},X_{2})}%
\geq \frac{e}{e+1}\approx 0.73.
\end{equation*}
\end{proposition}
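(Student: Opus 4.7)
The plan is to deduce the proposition directly from Theorem \ref{th:regular}. The key observation is that every monotonic mechanism---that is, every IC, IR mechanism $\mu=(q,s)$ whose payment function $s$ is nondecreasing---is, a fortiori, among the IC, IR mechanisms over which $\textsc{Rev}(X_{1},X_{2})$ is taken as a supremum. Hence we have the trivial but useful inclusion
\[
\textsc{MonRev}(X_{1},X_{2}) \;\le\; \textsc{Rev}(X_{1},X_{2})
\]
for any two-good random valuation $X=(X_{1},X_{2})$, irrespective of regularity.

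With this inequality in hand, the rest is immediate. Applying Theorem \ref{th:regular} to the weakly regular independent pair $(X_{1},X_{2})$ yields $\textsc{SRev}(X_{1},X_{2}) \ge (e/(e+1))\,\textsc{Rev}(X_{1},X_{2})$, and chaining with the inclusion above gives
\[
\textsc{SRev}(X_{1},X_{2}) \;\ge\; \frac{e}{e+1}\,\textsc{Rev}(X_{1},X_{2}) \;\ge\; \frac{e}{e+1}\,\textsc{MonRev}(X_{1},X_{2}),
\]
which is exactly the claimed bound.

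There is no substantive obstacle in this argument; the proposition is essentially a one-line corollary of Theorem \ref{th:regular}. Its interest lies not in the proof but in the observation that the $73\%$ guarantee persists even after one restricts the seller to the structurally simpler class of monotonic mechanisms---a nontrivial restriction, since (as noted in the introduction) the optimal mechanism for two independent goods need not be monotonic, even under weak regularity. Put differently, any revenue ``lost'' by imposing monotonicity of $s$ only tightens the denominator and so can only improve the ratio $\textsc{SRev}/\textsc{MonRev}$ relative to the ratio $\textsc{SRev}/\textsc{Rev}$ bounded in Theorem \ref{th:regular}.
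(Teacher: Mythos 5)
Your argument is correct for the proposition exactly as stated --- $\textsc{MonRev}\le\textsc{Rev}$ holds trivially, so the bound follows at once from Theorem~\ref{th:regular} --- but the paper takes a genuinely different route, and for a reason. The paper replaces each $X_i$ by the equal-revenue valuation $V_i$ with tail $\hat G_i(t)=\min\{r_i/t,1\}\ge G_i(t)$, so that $V_i$ first-order stochastically dominates $X_i$. Because a monotonic mechanism has a nondecreasing payment function $s$, FOSD (coordinatewise, using independence) gives $\mathbb{E}[s(X_1,X_2)]\le\mathbb{E}[s(V_1,V_2)]$, hence $\textsc{MonRev}(X)\le\textsc{MonRev}(V)\le\textsc{Rev}(V)$. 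The ER goods $V_i$ are automatically weakly regular (their virtual value is identically zero on $[r_i,\infty)$), and $\textsc{SRev}(X)=\textsc{SRev}(V)$ by construction, so applying Theorem~\ref{th:regular} to $V$ yields the bound. Observe that this argument never invokes the weak regularity of $X$: the paper's proof actually establishes $\textsc{SRev}(X)/\textsc{MonRev}(X)\ge e/(e+1)$ for \emph{all} independent two-good valuations, regular or not. Your one-line deduction is valid under the stated regularity hypothesis, but it does not recover this more general fact; the pass to the FOSD-dominating ER distribution is precisely what lets one drop the regularity of $X$ and exploit only monotonicity of the mechanism. It also highlights the structural point behind the proposition: $\textsc{Rev}$ is \emph{not} monotone under FOSD (this is the nonmonotonicity phenomenon noted in the introduction), but $\textsc{MonRev}$ is, and that monotonicity is what makes the ER comparison available.
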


\begin{proof}
Put $r_{i}:=$\textsc{Rev}$(X_{i}),$ and let $V_{i}$ be the \textquotedblleft
equal-revenue" (\textrm{$\mathtt{ER}$}) random valuation with the same
revenue $r_{i}$ as $X_{i};$ i.e., its tail distribution function is $\hat{G}%
_{i}(t)=\min \{r_{i}/t,1\}\geq G_{i}(t)$. Take $V_{1}$ and $V_{2}$ to be
independent, and put $V=(V_{1},V_{2}).$ Because $V_{i}$ first-order
stochastically dominates $X_{i},$ for every monotonic mechanism $\mu =(q,s)$
we have $R(\mu ;X)=\mathbb{E}\left[ s(X_{1},X_{2})\right] \leq \mathbb{E}%
\left[ s(V_{1},V_{2})\right] =R(\mu ;V).$ Therefore 
\begin{equation*}
\text{\textsc{MonRev}}(X)\leq \text{\textsc{MonRev}}(V)\leq \text{\textsc{Rev%
}}(V).
\end{equation*}%
The \textrm{$\mathtt{ER}$}-good\textrm{\ }$V_{i}$ is weakly regular (because
on its support $[r_{i},\infty )$ the virtual valuation function $t-\hat{G}%
_{i}(t)/\hat{f}_{i}(t)$ is identically $0),$ and so \textsc{SRev}$(V)\geq
e/(e+1)$\textsc{Rev}$(V)$ by Theorem \ref{th:regular}; together with \textsc{%
SRev}$(X)=\ $\textsc{SRev}$(V)$ (by construction) and \textsc{MonRev}$%
(X)\leq \ $\textsc{Rev}$(V)$ (see above), the result follows.
\end{proof}

\appendix{}

\section{Appendix: Revenue Continuity\label{s:continuity}}

This appendix deals with the continuity of the revenue with respect to
valuations, which is of independent interest. Take a sequence of $k$-good
valuations $X^{n}$ that converges in distribution to the $k$-good random
valuation $X$; does the sequence of revenues \textsc{Rev}$(X^{n})$ converge
to\footnote{%
Only the distribution of a random valuation $X$ matters for the revenue
achievable from $X;$ it is thus natural to consider what happens when $X^{n}$
converges in distribution to $X$. Formally, convergence in distribution is
equivalent to the cumulative distribution functions converging pointwise at
all points of continuity of the limit cumulative distribution. Informally,
being close in distribution means that the probabilities of nearby values
are close (see, for instance, (\ref{eq:prohorov}) below). Billingsley's
(1968) book is a good reference for the concepts used here.} \textsc{Rev}$%
(X) $?

Even in the one-good case this need not be so: for each $n$ let $X^{n}$ be
the one-good valuation that takes value $0$ with probability $1-1/n$ and
value $n$ with probability $1/n.$ Then $X^{n}$ converges in distribution to
the valuation $X$ that takes value $0$ with probabilty $1.$ But \textsc{Rev}$%
(X^{n})=1$ (with the posted price of $n)$ while \textsc{Rev}$(X)=0.$

We will show that if the valuations all lie in a bounded set---more
generally, if the random valuations are uniformly integrable---then the
limit of the revenues equals the revenue of the limit. We emphasize that all
the results in this appendix are for \emph{general} $k$-good valuations for
any\textbf{\ }$k\geq 1$, whether the goods are independent or not.\footnote{%
Monteiro (2015) establishes continuity of the optimal revenue in the
one-good case with $n$ independent buyers, when the valuations are bounded
and the limit distributions are continuous (his proof uses the
characterization of the optimal mechanism).}

Some notation. First, it will be convenient to work here with the $\ell _{1}$%
-norm on\footnote{%
This affects only the various constants below.} $\mathbb{R}^{k},$ i.e., $%
||x||_{1}=\sum_{i=1}^{k}|x_{i}|.$ The $\ell _{1}$-norm of a valuation $x$ in 
$\mathbb{R}_{+}^{k}$ provides a simple bound on the seller's payoff in any
mechanism $\mu =(q,s)$ that is IR: $s(x)\leq q(x)\cdot x\leq
\sum_{i}x_{i}=||x||_{1}$; thus, if a random valuation $X$ satisfies $%
||X||_{1}\leq M$ then \textsc{Rev}$(X)\leq M.$ Second, the \emph{Prohorov
distance} between the distributions of $X$ and $Y,$ which we denote by $%
\mathrm{Dist}(X,Y),$ is defined as the infimum of all $\rho >0$ such that%
\begin{eqnarray}
\mathbb{P}\left[ X\in A\right] &\leq &\mathbb{P}\left[ Y\in B_{\rho }(A)%
\right] +\rho ,\text{\ \ and}  \label{eq:prohorov} \\
\mathbb{P}\left[ Y\in A\right] &\leq &\mathbb{P}\left[ X\in B_{\rho }(A)%
\right] +\rho  \notag
\end{eqnarray}%
for all measurable sets $A,$ where $B_{\rho }(A):=\{y:||y-x||_{1}<\rho $ for
some $x\in A\}$ is the $\rho $-neighborhood of $A.$ Thus $0\leq \mathrm{Dist}%
(X,Y)\leq 1,$ and $X^{n}$ converges in distribution to $X,$ which we write
as $X^{n}\overset{\mathcal{D}}{\rightarrow }X,$ if and only if $\mathrm{Dist}%
(X,X^{n})\rightarrow 0$ (again, see Billingsley 1968).

The basic result is that in the bounded case the distance between the
revenues of two random valuations is uniformly bounded by a function of the
Prohorov distance between their distributions.

\begin{proposition}
\label{p:dist}Let $X$ and $Y$ be $k$-good valuations with bounded values,
say, $||X||_{1},||Y||_{1}\leq M$ for some $M\geq 1.$ Then\footnote{%
We have not attempted to optimize the bound here.}%
\begin{equation*}
|\text{\textsc{Rev}}(X)-\text{\textsc{Rev}}(Y)|\leq (2M+1)\sqrt{\mathrm{Dist}%
(X,Y)}.
\end{equation*}
\end{proposition}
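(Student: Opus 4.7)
The plan is to prove $\text{Rev}(X)-\text{Rev}(Y) \le (2M+1)\sqrt{\rho}$ where $\rho := \text{Dist}(X,Y)$ (the reverse direction is symmetric). The strategy rests on three ingredients: the Strassen--Dudley coupling representation of Prohorov distance, the Lipschitz/convexity properties that IC + IR force on the buyer's payoff, and a regularization that converts any IC mechanism into one with a Lipschitz payment function at a controlled revenue cost. First, I would realize $(X,Y)$ on a common probability space with $\Pr[\|X-Y\|_1>\rho]\le\rho$; combined with the deterministic bound $\|X-Y\|_1\le 2M$ this yields $\mathbb{E}\|X-Y\|_1\le(2M+1)\rho$. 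Then, fixing $\varepsilon>0$, I take an IC, IR, NPT mechanism $\mu=(q,s)$ with $R(\mu;X)\ge\text{Rev}(X)-\varepsilon$; by Proposition~\ref{p:HN-p-5-6}(i) the buyer's payoff $b=q\cdot x-s$ is convex, $1$-Lipschitz in $\ell_1$ (because $q\in[0,1]^k$), and bounded $0\le b\le M$ on the support.

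Were $s$ Lipschitz, the coupling would immediately bound $|R(\mu;X)-R(\mu;Y)|$, but $s=q\cdot x-b$ may have large jumps at discontinuities of the subgradient $q$. The core of the argument is therefore to replace $\mu$ by a Lipschitz-smoothed mechanism $\mu^\delta=(q^\delta,s^\delta)$ (for a parameter $\delta>0$) that is IC and IR, whose payment $s^\delta$ is $O(M/\delta)$-Lipschitz, and that loses only $O(\delta)$ revenue uniformly: $R(\mu^\delta;X)\ge R(\mu;X)-O(\delta)$ for every valuation with $\|X\|_1\le M$. A natural construction combines a ``$\delta$-subsidy shift'', replacing $b(\cdot)$ by $b(\cdot+\delta\mathbf{1})$ (pre-paying the buyer $\delta$ per good and translating any discontinuity of $q$ off the mass of $X$), with the Moreau--Yosida envelope of the shifted payoff. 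The envelope's gradient inherits values in $[0,1]^k$ and is $1/\delta$-Lipschitz, so that $s^\delta=q^\delta\cdot y-b^\delta$ becomes $O(M/\delta)$-Lipschitz; NPT can be recovered by appending the outside option $(0,0)$ to the menu, which does not affect revenue since $b^\delta\ge 0$.

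With such a $\mu^\delta$ in hand, Lipschitzness combined with the Strassen coupling gives
\[
|R(\mu^\delta;X)-R(\mu^\delta;Y)|\;\le\;\tfrac{M}{\delta}(2M+1)\rho+2M\rho,
\]
and chaining with the revenue-loss step and $R(\mu^\delta;Y)\le\text{Rev}(Y)$ yields $\text{Rev}(X)-\text{Rev}(Y)\le\varepsilon+O\!\bigl(\delta+M(2M+1)\rho/\delta\bigr)$. Optimizing $\delta\asymp\sqrt{\rho}$ (absorbing $M$-dependent constants into the leading $(2M+1)$) and sending $\varepsilon\to 0$ delivers the stated bound.

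The delicate step is the uniform revenue-loss estimate in the regularization. A plain Moreau--Yosida smoothing of $b$ by itself is insufficient: if $X$ concentrates exactly at a discontinuity of $q$, the prox point $z^*(X)$ lies just before a jump of $s$, and $\mu^\delta$ collects almost none of the revenue that $\mu$ collects from $X$. The $\delta$-subsidy shift is precisely what moves such jumps off the support of $X$, so that $X$'s mass falls in the ``saturated'' (nearly constant) region of the smoothed payment. Quantifying the remaining loss as $O(\delta)$ uses the pair of subgradient inequalities $(q(X)-q(z^*))\cdot z^* \le s(X)-s(z^*) \le (q(X)-q(z^*))\cdot X$ that IC forces between $X$ and $z^*(X)$, together with the bound $\|z^*(X)-X\|_1\le k\delta$ intrinsic to the Moreau envelope; executing this uniformly in $X$ is the technical heart of the proof.
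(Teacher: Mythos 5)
There is a genuine gap, and it is precisely where you flag it: the claim that the shift-plus-Moreau-Yosida regularization loses at most $O(\delta)$ revenue \emph{uniformly} over valuations supported in $D_M$ does not hold for $k\ge 2$, and the sandwich of subgradient inequalities you cite cannot close it. The sandwich
$(q(X)-q(z^*))\cdot z^* \le s(X)-s(z^*) \le (q(X)-q(z^*))\cdot X$
only bounds the \emph{width} between its two brackets by $\|q(X)-q(z^*)\|_\infty\,\|X-z^*\|_1 = O(\delta)$; it says nothing about either bracket being close to zero, since $\|q(X)-q(z^*)\|_\infty$ can be of order $1$ across a jump of $q$, making $(q(X)-q(z^*))\cdot z^*$ of order $M$. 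The translation by $\delta\mathbf{1}$ does not move such jump sets off the mass of $X$ in general: take $k=2$ and the deterministic menu $\{(1,0,p_1),(0,1,p_2)\}$, whose kink set $\{x: x_1-p_1=x_2-p_2>0\}$ is \emph{invariant} under $x\mapsto x+\delta\mathbf{1}$. If $X$ is an atom on this diagonal and $\mu$ breaks the tie toward the higher-priced item, the prox point of the shifted payoff lands back on the diagonal with the averaged subgradient $(1/2,1/2)$, so $s^\delta(X)\approx (p_1+p_2)/2$, a revenue loss of order $|p_1-p_2|/2$ rather than $O(\delta)$. Thus the proposed $\mu^\delta$ is not a valid replacement for $\mu$ in Proposition~\ref{p:dist}, which must cover arbitrary (atomic) bounded distributions.

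The paper avoids smoothing entirely. Its regularization multiplies all payments by $1-\alpha$ and lets the buyer re-optimize over the closure of the discounted menu (the seller-favorable perturbation of Hart and Reny 2015); combined with IC of the \emph{original} mechanism at $x$, this yields the pointwise inequality $\tilde{s}(y)\ge s(x)-\alpha M-(1-\alpha)\rho/\alpha$ for all $y$ with $\|y-x\|_1\le\rho$, with no Lipschitz constant for $s$ ever entering. Crucially, the multiplicative discount makes the discounted buyer strictly prefer higher-payment options among previously tied ones, which is exactly what neutralizes the diagonal-atom obstruction above. The paper then passes from this pointwise bound to revenues directly via the tail-integral identity $\mathbb{E}[s(X)]=\int_0^\infty \mathbb{P}[s(X)\ge t]\,\mathrm{d}t$ and the Prohorov inequalities applied to the sublevel sets $A(t)=\{x:s(x)\ge t\}$, so no Strassen coupling is needed. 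Optimizing $\alpha=\sqrt{\rho}$ gives $(2M+1)\sqrt{\rho}$; as a side remark, even if your regularization went through, the Lipschitz constant $M/\delta$ of $s^\delta$ would leave you with a bound of order $M^{3/2}\sqrt{\rho}$ rather than the stated linear-in-$M$ constant.
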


\begin{proof}
If $\mathrm{Dist}(X,Y)=1$ there is nothing to prove, since both revenues are
between $0$ and $M.$

Let thus $0<\rho <1$ be such that (\ref{eq:prohorov}) holds for every
measurable set $A\subseteq \mathbb{R}_{+}^{k},$ and take $\alpha $ so that $%
\rho \leq \alpha <1$ (the value of $\alpha $ will be determined later).
Denote by $D_{M}:=\{x\in \mathbb{R}_{+}^{k}:||x||_{1}\leq M\}$ the domain of
values of $X$ and $Y.$

Let $\mu =(q,s)$ be an IC, IR, and NPT mechanism, and let $b$ be its buyer
payoff function. We define a new mechanism $\tilde{\mu}$ by lowering all
payments by a factor of $1-\alpha $ (and letting the buyer reoptimize).
Thus, let $\mathrm{cl~}W\subset \mathbb{R}_{+}^{k+1}$ be the closure of the
set $W:=\{(q(x),(1-\alpha )s(x)):x\in D_{M}\}.$ For each $x\in D_{M}$ let $(%
\tilde{q}(x),\tilde{s}(x))$ be a maximizer for\footnote{%
This maximum is attained because $W$ is bounded, namely, $W\subseteq \lbrack
0,1]^{k}\times \lbrack 0,M],$ and so $\mathrm{cl}\ W$ is a compact set.} $%
\tilde{b}(x):=\max_{(g,t)\in \mathrm{cl}\ W}(g\cdot x-t).$ Then the
mechanism $\tilde{\mu}=(\tilde{q},\tilde{s})$ is IC (by the maximizer
definition), IR (because $\tilde{b}(x)\geq b(x)+\alpha s(x),$ which is
nonegative since $\mu $ is IR and NPT), and NPT (because $\mu $ is NPT).%
\footnote{%
Hart and Reny (2015) use this device of applying a small uniform discount to
the buyer's payments to show that, at an arbitrarily small cost, the seller
can perturb any IC and IR mechanism so that the buyer breaks any
indifference in the seller's favor (the resulting mechanism is thus called 
\emph{seller-favorable}).}

Let $x,y\in D_{M}$ be such that $||x-y||_{1}\leq \rho .$ Then $(\tilde{q}(y),%
\tilde{s}(y))\in \mathrm{cl}\ W$ can be approximated by elements of $W$: for
every $\varepsilon >0$ there is $z\in D_{M}$ such that, in particular, $|%
\tilde{s}(y)-(1-\alpha )s(z)|\leq \varepsilon $ and $|[\tilde{q}(y)\cdot y-%
\tilde{s}(y)]-[q(z)\cdot y-(1-\alpha )s(z)]|\leq \varepsilon .$ We then have%
\begin{eqnarray*}
q(z)\cdot y-(1-\varepsilon )s(z)+\varepsilon &\geq &\tilde{q}(y)\cdot y-%
\tilde{s}(y) \\
&\geq &q(x)\cdot y-(1-\alpha )s(x) \\
&=&q(x)\cdot x-s(x)+q(x)\cdot (y-x)+\alpha s(x) \\
&\geq &q(z)\cdot x-s(z)+q(x)\cdot (y-x)+\alpha s(x),
\end{eqnarray*}%
where the second inequality follows because $(q(x),(1-\alpha )s(x))\in W,$
and the last inequality follows because $(q,s)$ is IC. Rearranging gives%
\begin{equation*}
\alpha (s(z)-s(x))\geq (q(x)-q(z))\cdot (y-x)-\varepsilon .
\end{equation*}%
Now $|(q(x)-q(z))\cdot (y-x)|\leq \rho $ (because $q(x),q(z)\in \lbrack
0,1]^{k}$ and $||y-x||_{1}\leq \rho ),$ and so%
\begin{equation*}
(1-\alpha )s(z)\geq (1-\alpha )s(x)-\frac{1-\alpha }{\alpha }(\rho
+\varepsilon ).
\end{equation*}%
Recalling that $\tilde{s}(y)\geq (1-\alpha )s(z)-\varepsilon $ yields%
\begin{equation*}
\tilde{s}(y)\geq (1-\alpha )s(x)-\frac{1-\alpha }{\alpha }(\rho +\varepsilon
)-\varepsilon ;
\end{equation*}%
as $\varepsilon >0$ was arbitrary, we have%
\begin{equation*}
\tilde{s}(y)\geq (1-\alpha )s(x)-\frac{1-\alpha }{\alpha }\rho ,
\end{equation*}%
and so, using $s(x)\leq ||x||_{1}\leq M,$%
\begin{equation}
\tilde{s}(y)\geq s(x)-\beta ,  \label{eq:s'}
\end{equation}%
where $\beta :=\alpha M+(1-\alpha )\rho /\alpha .$

For each $t>0$ put $A(t):=\{x\in D_{M}:s(x)\geq t\}$ and $\tilde{A}%
(t):=\{x\in D_{M}:\tilde{s}(x)\geq t\}.$ Inequality (\ref{eq:s'}), which
applies whenever $||y-x||_{1}\leq \rho ,$ implies that $\tilde{A}(t-\beta
)\supseteq B_{\rho }(A(t)),$ and so%
\begin{eqnarray*}
\text{\textsc{Rev}}(Y) &\geq &\mathbb{E}\left[ \tilde{s}(Y)\right]
=\int_{0}^{\infty }\mathbb{P}\left[ Y\in \tilde{A}(t)\right] \mathrm{d}t\geq
\int_{\beta }^{M}\mathbb{P}\left[ Y\in \tilde{A}(t-\beta )\right] \mathrm{d}t
\\
&\geq &\int_{\beta }^{M}\mathbb{P}\left[ Y\in B_{\rho }(A(t))\right] \mathrm{%
d}t\geq \int_{\beta }^{M}\mathbb{P}\left[ X\in A(t)\right] \mathrm{d}%
t-(M-\beta )\rho \\
&\geq &\int_{0}^{M}\mathbb{P}\left[ X\in A(t)\right] \mathrm{d}t-\beta
-(M-\beta )\rho =\mathbb{E}\left[ s(X)\right] -\beta ^{\prime },
\end{eqnarray*}%
where $\beta ^{\prime }:=\beta +(M-\beta )\rho $ (for the fourth inequality
we have used (\ref{eq:prohorov}) and $\beta \leq M,$ which follows from $%
\rho \leq \alpha $ and $M\geq 1$).

Taking $\alpha =\sqrt{\rho }$ gives $\beta \leq (M+1)\sqrt{\rho }$ and $%
\beta ^{\prime }\leq \beta +M\rho \leq (2M+1)\sqrt{\rho }.$ The displayed
inequality above holds for every $\mu $ and every $\rho >\ $\textrm{Dist}$%
(X,Y),$ and so \textsc{Rev}$(Y)\geq \ $\textsc{Rev}$(X)-(2M+1)\sqrt{\mathrm{%
Dist}(X,Y)}.$ Interchanging $X$ and $Y$ completes the proof.
\end{proof}

\bigskip

A sequence of random variables $(X^{n})_{n\geq 1}$ is \emph{uniformly
integrable} if for every\emph{\ }$\varepsilon >0$ there is a finite $M$ such
that $\mathbb{E}\left[ ||X^{n}||_{1}~\mathbf{1}_{||X^{n}||_{1}>M}\right]
\leq \varepsilon $ for all $n.$

\begin{theorem}
\label{th:continuous revenue-1}Let $X^{n}$ be a sequence of $k$-good random
valuations that converges in distribution to the $k$-good random valuation $%
X.$ Then 
\begin{equation*}
\liminf%
_{n\rightarrow \infty }\text{\textsc{Rev}}(X^{n})\geq \text{\textsc{Rev}}(X).
\end{equation*}%
Moreover, if the sequence $X^{n}$ is uniformly integrable, then 
\begin{equation*}
\lim_{n\rightarrow \infty }\text{\textsc{Rev}}(X^{n})=\text{\textsc{Rev}}%
(X)<\infty .
\end{equation*}
\end{theorem}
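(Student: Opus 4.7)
\textbf{Proof plan for Theorem \ref{th:continuous revenue-1}.}

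\emph{Strategy.} The idea is to reduce to the case of bounded valuations, where Proposition \ref{p:dist} applies directly, by truncating each $X^n$ at norm $M$. For each $M>0$ let $X^{n,M}:=X^n\cdot \mathbf{1}_{\|X^n\|_1\le M}$ and $X^M:=X\cdot \mathbf{1}_{\|X\|_1\le M}$. These are all bounded in $\ell_1$-norm by $M$. The truncation map $x\mapsto x\cdot \mathbf{1}_{\|x\|_1\le M}$ is continuous off the hyperplane $\{\|x\|_1=M\}$, and since $\mathbb{P}[\|X\|_1=M]=0$ for all but countably many $M$, the continuous mapping theorem gives $X^{n,M}\overset{\mathcal{D}}{\to}X^M$ for all such $M$; Proposition \ref{p:dist} then yields $\text{\textsc{Rev}}(X^{n,M})\to\text{\textsc{Rev}}(X^M)$ as $n\to\infty$.

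\emph{Liminf direction.} The crucial observation is that for any IC, IR, NPT mechanism $\mu=(q,s)$ one has $s\ge 0$ and $s(0)=0$, so $R(\mu;X^{n,M})=\mathbb{E}[s(X^n)\mathbf{1}_{\|X^n\|_1\le M}]\le R(\mu;X^n)$; taking suprema gives $\text{\textsc{Rev}}(X^{n,M})\le \text{\textsc{Rev}}(X^n)$. On the other hand, I claim $\text{\textsc{Rev}}(X^M)\to \text{\textsc{Rev}}(X)$ as $M\to\infty$ along continuity points: for any fixed IC, IR, NPT mechanism $\mu$ with $R(\mu;X)<\infty$, one has $R(\mu;X)-R(\mu;X^M)=\mathbb{E}[s(X)\mathbf{1}_{\|X\|_1>M}]\to 0$ by dominated convergence, and the case $\text{\textsc{Rev}}(X)=\infty$ follows by taking mechanisms whose revenue from $X$ grows without bound. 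Chaining the two estimates gives $\liminf_n \text{\textsc{Rev}}(X^n)\ge \lim_n \text{\textsc{Rev}}(X^{n,M})=\text{\textsc{Rev}}(X^M)\to \text{\textsc{Rev}}(X)$.

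\emph{Limsup direction under uniform integrability.} First, UI implies $\sup_n \mathbb{E}[\|X^n\|_1]\le C<\infty$, and together with weak convergence and lower semicontinuity of $\|\cdot\|_1$ (Fatou on the Skorokhod coupling) gives $\mathbb{E}[\|X\|_1]\le C$, so $\text{\textsc{Rev}}(X)\le \mathbb{E}[\|X\|_1]<\infty$. For the bound: given $\varepsilon>0$, UI supplies a continuity point $M$ of $\|X\|_1$ with $\mathbb{E}[\|X^n\|_1\mathbf{1}_{\|X^n\|_1>M}]<\varepsilon$ for every $n$. Using $s(x)\le \|x\|_1$ (from IR), every IC, IR, NPT mechanism $\mu$ satisfies
\begin{equation*}
R(\mu;X^n)=R(\mu;X^{n,M})+\mathbb{E}[s(X^n)\mathbf{1}_{\|X^n\|_1>M}]\le \text{\textsc{Rev}}(X^{n,M})+\varepsilon,
\end{equation*}
so $\text{\textsc{Rev}}(X^n)\le \text{\textsc{Rev}}(X^{n,M})+\varepsilon$. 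Letting $n\to\infty$ and using the bounded convergence from the first paragraph gives $\limsup_n \text{\textsc{Rev}}(X^n)\le \text{\textsc{Rev}}(X^M)+\varepsilon\le \text{\textsc{Rev}}(X)+\varepsilon$. Sending $\varepsilon\to 0$ and combining with the liminf half completes the proof.

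\emph{Where the work is.} The substantive step is Proposition \ref{p:dist}; given it, the rest is an approximation argument, and the only delicate points are (i) ensuring that $M$ is chosen so both the continuous mapping theorem applies to truncation and UI controls the tail, and (ii) handling the possibly infinite $\text{\textsc{Rev}}(X)$ in the liminf half, which is why the liminf direction is stated without the UI hypothesis.
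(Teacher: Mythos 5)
Your proof is correct and follows essentially the same route as the paper: truncate to $X_{(M)}:=X\mathbf{1}_{\|X\|_1\le M}$, apply Proposition \ref{p:dist} to the bounded truncations along continuity points $M$ of $\|X\|_1$, observe that $\text{\textsc{Rev}}(X_{(M)})\le\text{\textsc{Rev}}(X^n_{(M)})+o(1)\le\text{\textsc{Rev}}(X^n)+o(1)$ for the liminf, and use $0\le s\le\|\cdot\|_1$ together with the UI tail bound for the limsup. The only stylistic difference is that the paper gets $\text{\textsc{Rev}}(X_{(M)})\uparrow\text{\textsc{Rev}}(X)$ by noting that $\mathbb{E}[s(X_{(M)})]$ increases monotonically in $M$ (which also dispatches the $\text{\textsc{Rev}}(X)=\infty$ case without a separate argument), whereas you invoke dominated convergence and then patch the infinite case by hand; the monotone version is marginally cleaner but the substance is the same.
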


\begin{proof}
For every $M>0$ and every $k$-good valuation $X,$ denote $X_{(M)}:=X\mathbf{1%
}_{||X||_{1}\leq M}.$ Any IR, IC, and NPT mechanism $\mu =(q,s)$ satisfies $%
s\geq 0$ and $s(0)=0,$ and so $\mathbb{E}\left[ s(X_{(M)})\right] =\mathbb{E}%
\left[ s(X)\mathbf{1}_{||X||_{1}\leq M}\right] $ monotonically increases to $%
\mathbb{E}\left[ s(X)\right] $ as $M$ increases to infinity. Therefore 
\textsc{Rev}$(X_{(M)})$ monotonically increases to \textsc{Rev}$(X).$

If $X^{n}\overset{\mathcal{D}}{\rightarrow }X$ then $X_{(M)}^{n}\overset{%
\mathcal{D}}{\rightarrow }X_{(M)}$ for almost every $M>0$---specifically,
for those $M$ where\footnote{%
These are the points of continuity of the cumulative distribution function
of $||X||_{1};$ see Corollary 1 to Theorem 5.1 in Billingsley (1968).} $%
\mathbb{P}\left[ ||X||_{1}=M\right] =0$---and so $\lim_{n}$\textsc{Rev}$%
(X_{(M)}^{n})=\ $\textsc{Rev}$(X_{(M)})$ by Proposition \ref{p:dist}. Now 
\textsc{Rev}$(X^{n})\geq \ $\textsc{Rev}$(X_{(M)}^{n}),$ and
hence\linebreak\ $%
\liminf%
_{n}$\textsc{Rev}$(X^{n})\geq \ $\textsc{Rev}$(X_{(M)})$ for almost every $%
M. $ Letting $M\rightarrow \infty $ proves the first part of the theorem.

If in addition the sequence $X_{n}$ is uniformly integrable, then for every $%
\varepsilon >0$ there is $M>0$ with $\mathbb{P}\left[ ||X||_{1}=M\right] =0$
that is large enough so that $\mathbb{E}\left[ ||X^{n}||_{1}~\mathbf{1}%
_{||X^{n}||_{1}>M}\right] \leq \varepsilon $ for all $n.$ Since, as seen
above, $0\leq s(x)\leq ||x||_{1}$ for every IR and NPT mechanism $\mu =(q,s)$%
, it follows that $\mathbb{E}\left[ s(X^{n})\mathbf{1}_{||X^{n}||_{1}>M}%
\right] \leq \varepsilon $ for all $n$, and thus \textsc{Rev}$%
(X_{(M)}^{n})\geq \ $\textsc{Rev}$(X^{n})-\varepsilon $ for all $n$ (this
also shows that the revenues are all finite, as they are bounded by $%
M+\varepsilon ).$ Therefore \textsc{Rev}$(X)\geq \ $\textsc{Rev}$%
(X_{(M)})=\lim_{n}$\textsc{Rev}$(X_{(M)}^{n})\geq 
\limsup%
_{n}$\textsc{Rev}$(X^{n})-\varepsilon ,$ which, together with the first part
of the theorem, proves the second part.
\end{proof}

\subsection{Continuous Valuations\label{sus:wlog-continuity}}

It is often convenient---as in the present paper---to restrict attention to
random valuations whose distributions admit a density function (i.e., their
cumulative distribution functions are absolutely continuous; we refer to
these for now as \textquotedblleft continuous"). We now show that for the
results in the present paper one may restrict attention to continuous random
valuations. Indeed, assume that we have already proved a result of the form 
\textsc{Rev}$(X)\leq \theta ~$\textsc{SRev}$(X)$ for all such continuous $X,$
and let $X$ be a valuation that is not necessarily continuous (and so it may
have atoms and even finite support). First, because, as we have seen in the
proof of Theorem \ref{th:continuous revenue-1} above, \textsc{Rev}$%
(X_{(M)})\rightarrow _{M\rightarrow \infty }$\textsc{Rev}$(X)$ and \textsc{%
SRev}$(X_{(M)})\rightarrow _{M\rightarrow \infty }$\textsc{SRev}$(X)$, it
suffices to prove the result for random valuations $X$ with bounded values,
say $||X||_{1}\leq M.$ Second, let $U$ be independent of $X$ and distributed
uniformly on $[0,1]^{k},$ and for every $n$ define $X^{n}:=X+(1/n)U.$ Then,
clearly, the valuations $X^{n}$ are continuous, $X^{n}\overset{\mathcal{D}}{%
\rightarrow }X,$ and the sequence $X^{n}$ is bounded ($||X^{n}||_{1}\leq
||X||_{1}+(1/n)k\leq M+k)$; therefore \textsc{Rev}$(X^{n})\rightarrow
_{n\rightarrow \infty }$\textsc{Rev}$(X)$ and \textsc{SRev}$%
(X^{n})\rightarrow _{n\rightarrow \infty }$\textsc{SRev}$(X)$ (apply the
second part of Theorem \ref{th:continuous revenue-1} to the sequences $X^{n}%
\overset{\mathcal{D}}{\rightarrow }X$ and $X_{i}^{n}\overset{\mathcal{D}}{%
\rightarrow }X_{i}$ for all goods $i).$ Thus \textsc{Rev}$(X)\leq \theta ~$%
\textsc{SRev}$(X)$ holds for every bounded $X,$ and so for every $X.$

\section{Appendix: Nonsymmetric Diagonals\label{s:nonsymm}}

In this appendix we illustrate how the use of nonsymmetric diagonals alone
may strictly improve the $50\%$ bound of Hart and Nisan (2017), and, in some
cases, also the $62\%$ bound of our Theorem \ref{th:62.2}.\footnote{%
To get more than $50\%$ the single-good revenues need just to be different,
and to get more than $62\%$ they need to be significantly so (with the
revenue of one good about $9$ times higher than the revenue of the other).}
However, this improvement is \emph{not uniform}, in the sense that it does
not yield a better constant than\footnote{%
Yet another non-uniform bound may be obtained by optimizing $\lambda $ in
the first line of (\ref{eq:R1R2}): 
\begin{equation*}
\text{\textsc{Rev}}(X_{1},X_{2})\leq R_{1}+R_{2}+\min \left\{ \frac{2}{\sqrt{%
e}}\sqrt{R_{1}R_{2}},\frac{2}{e}\sqrt{R_{1}R_{2}}+\left( 1-\frac{1}{e}%
\right) \min \{R_{1},R_{2}\}\right\} ,
\end{equation*}%
where $R_{i}=$\textsc{Rev}$(X_{i})$ for $i=1,2.$} $50\%.$

\begin{proposition}
\label{p:sqrt(r1*r2)}Let $X=(X_{1},X_{2})$ be a two-good random valuation
with independent goods. Then%
\begin{equation*}
\text{\textsc{Rev}}(X_{1},X_{2})\leq \left( \sqrt{\text{\textsc{Rev}}(X_{1})}%
+\sqrt{\text{\textsc{Rev}}(X_{2})}\right) ^{2}.
\end{equation*}
\end{proposition}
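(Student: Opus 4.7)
The plan is to re-run the nonsymmetric-diagonal decomposition of Section \ref{s:bound} while bounding the $K_i$-term by the crudest Hart--Nisan-style estimate (\emph{without} invoking Proposition \ref{p:bound-I}), and then to optimize the scaling parameter via AM-GM. This is meant to showcase that nonsymmetric diagonals alone already push past $50\%$, albeit non-uniformly.

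First I would rescale via Lemma \ref{l:rescale-rev}: for any $\lambda_1, \lambda_2 \in (0,1]$, setting $\tilde X_i := X_i/\lambda_i$ and $r_i := \text{\textsc{Rev}}(\tilde X_i) = R_i/\lambda_i$ (with $R_i := \text{\textsc{Rev}}(X_i)$), Proposition \ref{p:revenue} applied to $(\tilde X_1, \tilde X_2)$ yields, through inequality (\ref{eq:E[s]}), the bound $\text{\textsc{Rev}}(X_1,X_2) \leq \lambda_1 r_1 + \lambda_2 r_2 + \mathbb{E}[\varphi_1(\tilde X_2)(\Lambda - r_1)] + \mathbb{E}[\varphi_2(\tilde X_1)(\Lambda - r_2)] - \mathbb{E}[\Phi(\Lambda)]$, where $\Lambda := \min\{\tilde X_1, \tilde X_2\}$.

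Next I would discard the nonpositive $-\mathbb{E}[\Phi(\Lambda)]$ term and use $0 \leq \varphi_i \leq \lambda_i$ to upper-bound each cross-term by $\lambda_i\,\mathbb{E}[(\Lambda - r_i)^+]$. Invoking the one-good tail bound $\tilde G_j(t) \leq r_j/t$ from (\ref{eq:one good}) and integrating, I would obtain $\mathbb{E}[(\Lambda - r_i)^+] = \int_{r_i}^\infty \tilde G_1(t)\tilde G_2(t)\,\mathrm{d}t \leq r_1 r_2/r_i$, so the two cross-terms together contribute at most $\lambda_1 r_2 + \lambda_2 r_1$. Writing $\lambda := \lambda_1/\lambda_2$ and substituting $r_i = R_i/\lambda_i$, this collapses to $\text{\textsc{Rev}}(X_1,X_2) \leq R_1 + R_2 + \lambda R_2 + R_1/\lambda$.

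Finally, AM-GM gives $\lambda R_2 + R_1/\lambda \geq 2\sqrt{R_1 R_2}$, with equality at $\lambda^\star := \sqrt{R_1/R_2}$. The choices $\lambda_1 = \min\{1, \lambda^\star\}$ and $\lambda_2 = \min\{1, 1/\lambda^\star\}$ stay in $(0,1]$ and realize $\lambda = \lambda^\star$, delivering $\text{\textsc{Rev}}(X_1,X_2) \leq (\sqrt{R_1}+\sqrt{R_2})^2$. I do not expect any step to present real difficulty: the whole argument is essentially the symmetric Hart--Nisan $50\%$ proof applied to a rescaled diagonal, closed by one line of AM-GM. The non-uniform character of the improvement---the factor collapses back to $1/2$ precisely when $R_1 = R_2$---is visible directly from the formula $\lambda R_2 + R_1/\lambda$.
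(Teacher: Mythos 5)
Your argument is correct. Every step checks out: the reduction via Lemma \ref{l:rescale-rev}, the use of the intermediate inequality (\ref{eq:E[s]}), the bound $\varphi_i \leq \lambda_i$ giving $\mathbb{E}[\varphi_i(\cdot)(\Lambda - r_i)] \leq \lambda_i\,\mathbb{E}[(\Lambda - r_i)^+]$, the identity $\mathbb{E}[(\Lambda - r_i)^+] = \int_{r_i}^\infty \tilde G_1 \tilde G_2$, the estimate $\leq r_1 r_2 / r_i$ (which is (\ref{eq:int-G1G2}) at $u = r_i$), the discarding of $-\mathbb{E}[\Phi(\Lambda)] \leq 0$, and the final AM-GM with $\lambda^\star = \sqrt{R_1/R_2}$, realized within $(0,1]^2$ by the choices you specify.

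Conceptually you are doing the same thing the paper does: split along a nonsymmetric diagonal $x_1 = \lambda x_2$ (equivalently, rescale to the symmetric diagonal), bound the cross-contribution of the \textquotedblleft other\textquotedblright\ good by an equal-revenue tail estimate, and optimize $\lambda$ by AM-GM. Both approaches arrive at the identical intermediate bound $R_1 + R_2 + \lambda R_2 + R_1/\lambda$ before closing. The implementation differs, though. The paper's own proof is deliberately light: it reruns Hart--Nisan's Theorem A argument directly with the tilted diagonal, using only the one-line bound $z\,\mathbb{P}[Y \geq \lambda z] \leq \text{\textsc{Rev}}(Y)/\lambda$ per region, and never touches $\Phi$, $\varphi_i$, or the $K_i$-machinery. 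You instead route through Lemma \ref{l:rescale-rev} and the intermediate display (\ref{eq:E[s]}) of Proposition \ref{p:revenue}, then throw away the $\Phi$-term and crudely cap $\varphi_i$ at $\lambda_i$. That is heavier technology for the same payoff, but it has the pedagogical virtue of making visible exactly what the refined bound of Proposition \ref{p:bound-I} buys you beyond the crude estimate (the factor $1/e$ on the cross-terms), and of showing that Proposition \ref{p:sqrt(r1*r2)} is a strict weakening of the machinery developed in Sections \ref{s:bound}--\ref{s:K1-K2}. Either route is fine; the paper's is shorter and self-contained, while yours reuses more of the scaffolding already built.
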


\noindent \textbf{Remark. }When \textsc{Rev}$(X_{1})\neq \ $\textsc{Rev}$%
(X_{2})$ the right-hand side is strictly less than $2($\textsc{Rev}$%
(X_{1})+\ $\textsc{Rev}$(X_{2})),$ the bound of Theorem A of Hart and Nisan
(2017) (when \textsc{Rev}$(X_{1})=\ $\textsc{Rev}$(X_{2})$ the two bounds
are the same).

\bigskip

\begin{proof}
We follow the proof of Theorem A in Hart and Nisan (2017), but we now split
the computation along the diagonal $Y=\lambda Z$ for some $\lambda >0$
(instead of splitting along $Y=Z).$ The arguments in the proof there carry
over, and, for each fixed value $z$ of $Z,$ we now have%
\begin{eqnarray*}
\mathbb{E}\left[ s(Y,Z)\mathbf{1}_{Y\geq \lambda Z}|Z=z\right] &\leq &\text{%
\textsc{Rev}}(Y)+z\mathbb{P}\left[ Y\geq \lambda z\right] \\
&=&\text{\textsc{Rev}}(Y)+\frac{1}{\lambda }(\lambda z)\mathbb{P}\left[
Y\geq \lambda z\right] \\
&\leq &\text{\textsc{Rev}}(Y)+\frac{1}{\lambda }\text{\textsc{Rev}}%
(Y)=\left( 1+\frac{1}{\lambda }\right) \text{\textsc{Rev}}(Y).
\end{eqnarray*}%
Similarly, for each fixed value $y$ of $Y$, 
\begin{eqnarray*}
\mathbb{E}\left[ s(Y,Z)\mathbf{1}_{Z\geq (1/\lambda )Y}|Y=y\right] &\leq &%
\text{\textsc{Rev}}(Z)+y\mathbb{P}\left[ Z\geq \frac{y}{\lambda }\right] \\
&=&\text{\textsc{Rev}}(Z)+\lambda \frac{y}{\lambda }\mathbb{P}\left[ Z\geq 
\frac{y}{\lambda }\right] \\
&\leq &\text{\textsc{Rev}}(Z)+\lambda \text{\textsc{Rev}}(Z)=(1+\lambda )%
\text{\textsc{Rev}}(Z).
\end{eqnarray*}%
Taking expectation over the values of $Z$ and $Y,$ adding the two
inequalities, and then minimizing the resulting expression over $\lambda $
(by taking $\lambda =\sqrt{\text{\textsc{Rev}}(Y)/\text{\textsc{Rev}}(Z)}~$)
yields the result.
\end{proof}

\bigskip

\noindent \textbf{Remark. }A better bound than the one of Proposition \ref%
{p:sqrt(r1*r2)}, albeit also non-uniform, has been obtained by Kupfer (2017).

\end{document}